\documentclass[11pt]{article}

\usepackage[margin=1in]{geometry}
\usepackage{amsmath,amssymb,amsthm,mathtools}
\usepackage{thmtools}
\usepackage{enumitem}
\usepackage{xcolor}
\usepackage{tikz}
\usetikzlibrary{positioning,calc,arrows.meta,decorations.pathreplacing,backgrounds,fit,shapes.geometric}
\usepackage{float}
\usepackage{algorithm}
\usepackage[noEnd=true]{algpseudocodex}
\usepackage[numbers,sort&compress]{natbib}
\usepackage[colorlinks=true,linkcolor=blue,citecolor=blue,urlcolor=blue]{hyperref}
\usepackage[noabbrev,nameinlink,capitalize]{cleveref}

\declaretheorem[numberwithin=section]{theorem}
\declaretheorem[numberlike=theorem]{lemma}
\declaretheorem[numberlike=theorem]{proposition}
\declaretheorem[numberlike=theorem]{corollary}
\declaretheorem[numberlike=theorem]{observation}

\crefname{algorithm}{Algorithm}{Algorithms}

\newcommand{\OPT}{\mathrm{OPT}}
\newcommand{\MST}{\mathrm{MST}}

\newcommand{\eps}{\varepsilon}
\newcommand{\E}{\mathbb{E}}
\newcommand{\cE}{\mathcal E}
\newcommand{\Prb}{\mathbb{P}}
\newcommand{\tO}{\widetilde{O}}
\newcommand{\RGMIS}{\operatorname{RGMIS}}
\newcommand{\Query}{\mathsf{Query}}
\newcommand{\Member}{\mathsf{Member}}
\newcommand{\Residual}{\mathsf{Residual}}
\newcommand{\rank}{\operatorname{rank}}
\newcommand{\dcup}{\mathbin{\dot\cup}}

\algrenewcommand\textproc{\textsf}
\makeatletter
\providecommand{\theHALG@line}{}
\renewcommand{\theHALG@line}{\thealgorithm.\theALG@line}
\makeatother

\definecolor{DarkGreen}{RGB}{0,100,0}

\title{Improved Sublinear Algorithms \\ for Maximal Independent Set\\
and Metric Steiner Forest}
\author{%
  Sepideh Mahabadi\\
  \small Microsoft Research\\
  \small\texttt{smahabadi@microsoft.com}
  \and
  Jakub Tarnawski\\
  \small Microsoft Research\\
  \small\texttt{jatarnaw@microsoft.com}
}
\date{}

\begin{document}
\maketitle

\begin{abstract}
In this work we consider the Maximal Independent Set (MIS) problem and the metric Steiner Forest problem in the sublinear time setting,
under the adjacency/distance matrix query model.
First, we give an algorithm that estimates the size of an MIS
up to a multiplicative factor of $(1+\eps)$ using $\tO(n^{4/3}/\eps^2)$ queries. This improves the best previous algorithm by Mahabadi, Roghani, Tarnawski, and
Vakilian (SODA 2026), which had a query complexity of $\tO(n^{3/2}/\eps^2)$.

Via a reduction from that work,
this would automatically imply the same improvement
for the problem of estimating the metric Steiner Forest cost
up to an $O(\log n)$ factor.
However, as our second contribution,
we consider the Steiner Forest problem directly
and provide an algorithm with $\tO(n)$ query complexity
that is very simple and does not proceed via MIS.
\end{abstract}

\section{Introduction}
\label{sec:intro}

Sublinear time algorithms form an active research area,
motivated by settings where problem instances are so large that one cannot afford to read the entire input.
There is by now a large body of work on sublinear time algorithms for graph
problems, including minimum spanning tree (MST)~\cite{ChazelleRubinfeldTrevisan05}, maximal independent set (MIS)~\cite{YoshidaYamamotoIto09}, matching~\cite{YoshidaYamamotoIto09,  kapralov2020space, Behnezhad21,BehnezhadRR23a, BehnezhadRRS-SODA23, BhattacharyaKS23, abrrfocs25, MahabadiRT25, BehnezhadRR24, BehnezhadRR23b, BhattacharyaKS-STOC23}, metric MST~\cite{CzumajSohler09}, metric Steiner Tree~\cite{ChenKhannaTan23, MahabadiRTV25}, and metric TSP~\cite{chen2020, ChenMetric-Arxiv22, TSP-icalp24}.
Sublinear algorithms are given {\em oracle
access} to the input.  Two oracle models are standard for graph problems.  In the {\em adjacency list
oracle}, a query $(v,i)$ with $v\in V$ and $i\le n$ returns the $i$-th
neighbor of $v$ (together with the weight of the corresponding edge in the
weighted case), or NULL if $v$ has fewer than $i$ neighbors.  In the {\em
adjacency matrix oracle}, a query $(u,v)$ with $u,v\in V$ returns whether
$(u,v)$ is an edge (again with its weight in the weighted case). In this work we consider the
adjacency matrix model.

\paragraph{Maximal Independent Set.}
An {\em independent set} of a graph $G=(V,E)$ is a set of vertices with no edges between them, and it is {\em maximal} (an MIS) if no
further vertex can be added to it, i.e., if every vertex outside the set
has a neighbor inside it.  While computing a {\em maximum} independent set
is NP-hard to approximate within a factor of
$n^{1-\eps}$~\cite{hastad1996clique,zuckerman2006linear}, a maximal one is
trivial to produce in linear time by a single greedy scan of the vertices.

MIS is a basic problem that has been studied as a problem in its own right, in various computational models such  as distributed and parallel \cite{luby1985simple,alon1986fast,BlellochFS12,ghaffari2016improved,FischerNoever19,GhaffariH21},
Massively Parallel Computing \cite{LattanziMSV11,BrandtFU18,Onak18,ghaffari2018improved,GhaffariU19,GhaffariGJ20},
Local Computation Algorithms \cite{NguyenOnak08,YoshidaYamamotoIto09,RubinfeldTVX11,alon2012space,ReingoldV16,LeviRY17,ghaffari2022local}, or
dynamic~\cite{BehnezhadDHSS19,AssadiOSS19,ChechikZ19}.
It has also been
used as a building block
in algorithms for other fundamental problems,
such as matching and
vertex cover~\cite{YoshidaYamamotoIto09,OnakRonRosenRubinfeld12,Behnezhad21}, or correlation clustering~\cite{AilonCN05,BehnezhadCMT22,DalirrooyfardMS24}.

\paragraph{Sublinear MIS.}
An MIS cannot be found explicitly in sublinear
time\footnote{Take a graph with a single edge, placed on a uniformly
random pair of vertices.  Any MIS must exclude one of its two endpoints, so
locating that edge is unavoidable, and this requires $\Omega(n^2)$ queries
in the adjacency matrix model.}, so the algorithmic task becomes that of
{\em estimating the size} of an MIS, or of designing an oracle that decides
efficiently whether a given vertex belongs to one fixed MIS.

Nguyen and
Onak~\cite{NguyenOnak08} initiated the study of sublinear MIS.
They introduced
the highly influential {\em random-greedy maximal independent set} (RGMIS) notion: fix a permutation
$\pi$ of $V$, scan the vertices in the order $\pi$, and add a vertex to
$\RGMIS(\pi)$ whenever none of its already-scanned neighbors was added.
This process comes with a natural recursive oracle: to decide whether
$v\in\RGMIS(\pi)$, recurse on the neighbors of $v$ that precede it in
$\pi$, and answer yes iff none of them is in.  Yoshida, Yamamoto and Ito~\cite{YoshidaYamamotoIto09} showed that if one queries the earlier
neighbors in increasing order of rank, then for a
random permutation and a random query vertex, the oracle makes
only $O(\Delta)$ recursive calls in expectation, where $\Delta$ is the maximum degree of the graph.
However, this only gives the answer for a single query vertex, and for dense graphs it does not directly yield a truly sublinear time algorithm. Very recently, the first sublinear algorithm for MIS {\em size} estimation was obtained by Mahabadi,
Roghani, Tarnawski, and Vakilian~\cite{MRTV26}: they gave a purely
multiplicative $(1+\eps)$-approximation of $|\RGMIS(\pi)|$ running in
$\tO(n^{3/2}/\eps^2)$ time.

\paragraph{First contribution.}
Our first result speeds up MIS size estimation from $\tO(n^{3/2})$ to
$\tO(n^{4/3})$.  As in~\cite{MRTV26}, the object we estimate is the size of
a random-greedy maximal independent set.

\begin{theorem}[MIS size estimation]\label{thm:mis}
For every $\eps\in(0,1)$ there is a randomized algorithm that, given a
graph $G=(V,E)$ on $n$ vertices with oracle access to its adjacency matrix,
outputs a number $\widehat M$ and a permutation $\pi$ of
$V$ such that, with high probability,
\[
  |\RGMIS(\pi)|\;\le\;\widehat M\;\le\;(1+\eps)\,|\RGMIS(\pi)|,
\]
and runs in $\tO(n^{4/3}/\eps^2)$ time.  (The algorithm does not list
$\RGMIS(\pi)$.)
\end{theorem}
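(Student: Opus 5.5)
The plan is a multi-phase ``peeling'' process in the spirit of~\cite{MRTV26}, with the thresholds rebalanced to reach the exponent $4/3$. Fix a random permutation $\pi$ and process it in a prefix phase followed by a simulated phase. In the \emph{prefix phase} we reveal the first $t$ vertices of $\pi$ (with $t$ to be chosen) and explicitly compute $I_0=\RGMIS(\pi)\cap\pi[1..t]$. This costs $O(t\cdot |I_0|)\le O(t^2)$ queries, since each new vertex is checked only against the current independent set. The key structural fact we then use is the standard one: after greedily processing a random prefix of length $t$, every vertex that survives (has no neighbor in $I_0$) has residual degree $\tO(n/t)$ with high probability. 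The reason is that a vertex of residual degree $d$ survives each later step with probability about $1-d/n$.

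In the \emph{simulated phase} we never list the residual graph. For a vertex $v$ we decide membership in $I_0$'s neighborhood with $|I_0|\le t$ queries, so residual-adjacency between two vertices costs $O(t)$. To estimate $|\RGMIS(\pi)|-|I_0|$ we sample $\tO(n/(\eps^2 \cdot |\RGMIS|))$ random vertices, plus enough to get a multiplicative guarantee; we guess $|\RGMIS|$ geometrically and use the fact that $|\RGMIS|\ge |I_0|$. On each sample we run the Yoshida--Yamamoto--Ito recursive oracle on the residual graph. Its expected number of recursive calls is $O(\Delta_{\mathrm{res}})=\tO(n/t)$, but each call must enumerate the earlier residual neighbors of a vertex. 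Finding them by scanning all $n$ vertices is too costly, so we instead sample and scan only vertices of rank above $t$, paying $O(n)$ adjacency queries per call plus $O(t)$ per candidate filter.

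To keep this cheap we reuse the rank-ordered lazy enumeration from~\cite{MRTV26}. The $O(\Delta)$ bound there really counts the edges explored, so the total is $\tO(n/t)$ explored pairs. Each pair requires an adjacency query and a residual check of $O(t)$, with caching of the residual status per vertex. Locating residual neighbors in rank order costs roughly $n$ per explored vertex, but the caching amortizes this: each vertex's residual status is computed once, at cost $t$. The total per sample is then about $\tO(n/t)\cdot t + \tO(n)$. Multiplied by the $\tO(n/(\eps^2|I|))$ samples, and balanced against the prefix cost $t^2$, setting $t=n^{2/3}$ gives $\tO(n^{4/3}/\eps^2)$. The upper-biased output $\widehat M$ then comes from adding $|I_0|$ exactly and a Chernoff upper estimate of the remainder, rescaled by $(1+\eps/2)$.

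The main obstacle is the accounting in the simulated phase. The YYI bound is in expectation over a random query vertex and $\pi$. Enumerating earlier neighbors in the adjacency matrix model costs up to $n$ per vertex rather than the vertex's degree, and a naive count would give $n\cdot n/t$ per sample, which is too much. I would first try a two-level threshold: a middle phase up to rank $n^{1/3}\cdot t$ whose vertices are handled by explicit sampling. After it, residual degrees drop to $n^{1/3}$, so full-row scans on the few recursive calls (about $n^{1/3}$ of them) cost $n^{4/3}$ in total, and only $\tO(1)$ samples are needed for a high-probability multiplicative estimate once $|I|\gtrsim n^{2/3}$. When $|I|$ is smaller, the whole MIS lies mostly in the prefix and is computed exactly. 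Verifying that this sample-count versus $|I|$ tradeoff closes in all regimes is where I expect the real work.
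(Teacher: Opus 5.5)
Your setup is the paper's: greedily pre-solve a random prefix of length $t=n^{2/3}$ without crossing off suffix neighbors, keep the residual graph $R$ implicit, and use that survivors have residual degree $D=\tO(n/t)$. The gaps are in the accounting for the simulated phase, and closing them is where the paper's main ideas are.

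\textbf{Sample count.} You charge every one of the $\tO(n/|I|)$ samples the full recursion cost, which only balances when $|I|\gtrsim n^{2/3}$. You then dismiss smaller $|I|$ by claiming the MIS lies mostly in the prefix. That is false. For a clique on $n-\sqrt n$ vertices plus $\sqrt n$ isolated vertices, $|I|=\sqrt n+1$, but only about $n^{1/6}$ of the isolated vertices land in the prefix. The ``two-level threshold'' fallback does not repair this either: a multiplicative estimate needs $\Theta(n/|I|)$ samples, not $\tO(1)$, when $|I|\approx n^{2/3}$. The missing observation is that $\RGMIS_{G[R]}(\tau)$ is maximal in a graph of maximum degree $D$, so $|R|\le(D+1)|\RGMIS(\pi)|$. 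Hence only $O(QD)$ roots fall in $R$ in expectation, whatever $|I|$ is. Each root outside $R$ is dismissed by one residual test of cost $O(|S|)\le O(|I|)$, for $O(Qn)$ in total (\cref{lem:num_roots}). Plain inverse sampling then works in every regime, with no case split or guessing.

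\textbf{Cost of a root in $R$.} Your $\tO(n/t)$ ``explored pairs'' are YYI edge traversals inside $G[R]$. But the raw scan at a residual vertex $x$ also meets every neighbor of $x$ outside $R$, up to $\Theta(n)$ per call since only $x$'s residual degree is small, and each needs a residual test. Caching statuses only bounds this by $t$ times the number of distinct tested vertices, i.e.\ up to $nt=n^{5/3}$, which is no better than crossing off explicitly. Caching also does nothing for the raw probes $(x,y)$, which are paid per call.

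The paper handles the residual tests with three ingredients:
\begin{itemize}
\item the \emph{relative} sparsification bound $b_y\le D(a_y+1)$ for every $y$, including non-survivors (\cref{lem:reldeg}, a supermartingale argument on $\mathbf 1_{\cE_t}/A_t$);
\item a residual test that scans $S$ in a fresh random order, so it costs $O(|S|/(a_y+1))$;
\item the traffic bound of $O(b_y/|R|)$ tests of $y$ per root (\cref{lem:calls}), proved for $y\notin R$ by adding guard vertices so that the YYI per-edge bound applies.
\end{itemize}
Multiplying by the $Q|R|/|I|$ roots in $R$ gives $Q\cdot\frac{b_y}{a_y+1}\cdot\frac{|S|}{|I|}\le QD$ per vertex $y$, and $O(QDn)$ overall. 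In words: the more traffic a discarded vertex gets, the cheaper it is to certify.

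The raw probes are bounded by $O(n)$ per root (\cref{lem:scan}). The positions of $R$ in the suffix are uniform and independent of the order on $R$, so each residual position scanned costs about $n/|R|$ raw positions. This sits on top of the $O(|R|)$ residual positions scanned, from \cite{MRTV26}.
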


The speed-up rests on {\em pre-solving a random prefix} of the greedy order
(\cref{sec:overview}).  Running greedy on the first $L$ vertices costs only
$O(L^2)$ probes, because -- unlike~\cite{MRTV26} -- we do \emph{not} cross
off the vertices that the prefix solution kills beyond the prefix.
The
price is that the residual graph carrying the rest of the MIS is never
materialized, and has to be handled implicitly.
What makes
that implicit graph tractable
is that
pre-solving the prefix {\em sparsifies} it,
leaving every surviving vertex with residual degree only $\tO(n/L)$.
Moreover, a
{\em relative} form of the same bound covers the vertices that did {\em not}
survive, so that the more oracle traffic such a vertex receives, the cheaper
it becomes to certify that it was discarded.  Balancing the two effects at
$L=n^{2/3}$ gives $\tO(n^{4/3})$.
We give a further high-level explanation in \cref{sec:overview-mis}.

The work of~\cite{MRTV26} uses the sublinear MIS result to get a sublinear time algorithm for the {\em metric Steiner Forest} defined shortly. As a direct corollary of our MIS result, its runtime is improved from $\tO(n^{3/2})$ to $\tO(n^{4/3})$.

\begin{corollary}
\label{cor:sf-via-mis}
Plugging \cref{thm:mis} into the reduction of~\cite{MRTV26} gives an
$O(\log n)$-approximation for metric Steiner Forest in $\tO(n^{4/3})$ time,
with high probability.
\end{corollary}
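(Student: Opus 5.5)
The plan is to treat the reduction of~\cite{MRTV26} as a black box and check that it uses the MIS estimator only through its interface and its query cost. Recall its structure. For metric Steiner Forest with distance-matrix access and demand pairs, it considers $O(\log n)$ geometrically increasing distance scales $r = (1+\delta)^i$ (or powers of $2$). At each scale it builds an implicit auxiliary graph $G_r$ on (a subset of) the terminals: two vertices are adjacent iff their distance is at most $r$, possibly together with a condition involving the demand pairs. Adjacency in $G_r$ is decided with $O(1)$ distance queries. The $O(\log n)$-approximate cost is then expressed, up to constant factors, as a weighted sum over scales of quantities of the form $r \cdot |\RGMIS_r|$, an MIS size in $G_r$ (or a related residual count). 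I would first restate that lemma from~\cite{MRTV26} precisely. The fact I need from it is that its correctness requires only a $(1+\eps)$-multiplicative (even constant-factor) estimate of each such MIS size, holding with high probability.

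Second, I would verify that \cref{thm:mis} supplies exactly this interface.
\begin{itemize}
\item It returns $\widehat M$ with $|\RGMIS(\pi)| \le \widehat M \le (1+\eps)|\RGMIS(\pi)|$ for some permutation $\pi$.
\item An MIS-size estimate that is valid for \emph{some} maximal independent set is what~\cite{MRTV26} uses, since any maximal independent set of $G_r$ is an $r$-packing and an $r$-covering of its vertex set. This is the only property the cost analysis needs, so the freedom in $\pi$ is harmless.
\item If the reduction instead needs a common permutation across scales, or membership queries, I would check whether the prior estimator was used only for size. I expect it was.
\end{itemize}

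Third, the accounting. Each adjacency query to $G_r$ costs $O(1)$ oracle queries, so one call to \cref{thm:mis} with constant $\eps$ costs $\tO(n^{4/3})$. There are $O(\log n)$ scales; if the reduction uses $\mathrm{polylog}(n)$ calls per scale, the total is still $\tO(n^{4/3})$. The remaining bookkeeping of the reduction is at most $\tO(n)$, for example one pass over the demand pairs or a scale-range estimation. For the failure probability, a union bound over the polylogarithmically many calls preserves the high-probability guarantee, after boosting each call's success probability to $1-n^{-c}$ by median-of-repetitions if needed.

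The main obstacle I anticipate is making sure the reduction uses the prior $\tO(n^{3/2})$ algorithm only as a size oracle on graphs with $O(1)$-query adjacency and at most $n$ vertices. If it instead relied on internals, such as the vertices crossed off by the prefix, then the substitution would not be black-box, and one would have to redo the corresponding step with the implicit-residual approach of \cref{sec:overview}. My first move is to locate in~\cite{MRTV26} the statement of the reduction lemma and confirm it is phrased in terms of an arbitrary MIS-size estimator. The corollary then follows immediately.
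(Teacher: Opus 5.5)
Your proposal matches the paper's argument. The reduction of~\cite{MRTV26} makes only $O(\log k)$ MIS-size estimation calls, each on an auxiliary graph whose adjacency queries cost one distance query. Its analysis uses only that each set is independent and maximal, so the freedom in the output permutation of \cref{thm:mis} is harmless, and a union bound over the calls, each with failure probability at most $n^{-20}$, finishes the proof.

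One small difference: no median boosting is needed. The paper just enlarges the constants in $Q$ and in the number of trials inside the construction of \cref{thm:mis}. That is the better choice, because a median over runs with different permutations would not be tied to a single maximal independent set.
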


We note that we further improve this runtime below.

\paragraph{Steiner Forest.}
In the {\em Steiner Forest} problem we are given a weighted undirected graph
$G=(V,E,w)$ with $w:E\rightarrow\mathbb{R}_{\ge0}$ together with a set of
$k$ terminal (or demand) pairs $T=\{(s_1,t_1),\dots,(s_k,t_k)\}$, and we
must find a subgraph of minimum total weight in which every pair
$(s_j,t_j)$ is connected.  The problem generalizes Steiner Tree and is
therefore APX-hard~\cite{ChlebikChlebikova08}.  It has been studied for decades~\cite{AgrawalKleinRavi95, GoemansWilliamson95, GuptaKumar15}, and very recently~\cite{AhmadiGHJM25,GuptaTraub25} improved the long-standing approximation factor of $2$.

\paragraph{Sublinear Steiner Forest.}
Here the
input is a metric space $(V,d)$ on $n$ points, presented as an $n\times n$
distance matrix that the algorithm can query, together with
the $k$ demand pairs; the aim is to estimate the optimal Steiner Forest
cost $\OPT$ using far fewer than the $\binom{n}{2}$ queries needed to read
the whole metric.  Two features of the model are inherent.  First, the
metric assumption is essential: for general distances, even a rough
estimate of the minimum spanning tree weight requires $\Omega(n^2)$
queries~\cite{MRTV26}.  Second, one can only hope to estimate the optimal
\emph{value}: producing an actual near-optimal forest, or even a
spanning tree, already forces $\Omega(n^2)$ queries~\cite{Indyk99}.

Sublinear Steiner {Forest} was studied recently by Mahabadi,
Roghani, Tarnawski, and Vakilian~\cite{MRTV26}, who gave an
$O(\log k)$-approximation running in $\tO(n^{3/2})$ time.

\paragraph{Second contribution.}
The existing sublinear algorithms for Steiner Forest
(\cite{MRTV26} and \cref{cor:sf-via-mis})
proceed via a reduction to MIS.
It is natural to ask whether this is necessary.
Our second result shows that
the optimal Steiner Forest cost
can be estimated with the same $O(\log k)$ approximation
in
{\em near-linear} time
by a short reduction to metric minimum spanning
tree estimation, for which it invokes the algorithm
of Czumaj and Sohler~\cite{CzumajSohler09}.

\begin{theorem}[Steiner Forest]\label{thm:sf}
There is a randomized algorithm that, given oracle access to the distance
matrix of an $n$-point metric space and a list of $k$ demand pairs, outputs
with high probability an $O(\log k)$-approximation of the Steiner Forest
cost, in $\tO(k)$ time.
\end{theorem}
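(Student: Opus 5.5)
We reduce Steiner Forest estimation to metric MST estimation on a small point set, with one MST call per distance scale, and invoke the algorithm of Czumaj and Sohler~\cite{CzumajSohler09} on each. The natural point set is the set $T'$ of at most $2k$ terminals. All Steiner points are discarded: since $(V,d)$ is a metric, any Steiner forest can be converted into a forest on terminals only, losing at most a factor $2$. The queries are then only among $O(k)$ points, and the Czumaj--Sohler estimator, which $(1+\eps)$-approximates the metric MST weight on $m$ points in $\tO(m)$ queries, costs $\tO(k)$ per call.

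\textbf{Estimator.} Consider a geometric sequence of scales $r = 2^i$. It suffices to use $O(\log k)$ relevant scales after a normalization step, discussed below. For each scale $r$, let $A_r$ be the set of demand pairs with $d(s_j,t_j)\ge r$, and let $P_r$ be their terminals. Let $C_r$ be the number of connected components of the threshold graph on $P_r$ with edges of length $< r$. This number is recoverable from MST estimates, because $\MST$ weight equals $\int_0^\infty (c(t)-1)\,dt$, where $c(t)$ is the number of components at threshold $t$. Czumaj--Sohler in fact estimates component counts at thresholds. The estimate is
\[
  \widehat{\OPT} \;=\; \sum_i 2^i\,\bigl(\#\text{components at scale }2^i\text{ that contain an active pair}\bigr).
\]

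\textbf{Lower bound.} I expect this to follow a moat/dual-packing argument. Fix a scale $r$. Balls of radius $r/4$ around representatives of distinct components are disjoint, and each such component contains a terminal whose partner lies at distance $\ge r$. The optimal forest must therefore cross each such ball, paying $\Omega(r)$ inside it. Each single scale thus gives $\OPT \gtrsim r\cdot C_r$. Summing over the $O(\log k)$ scales loses a $\log k$ factor, so $\widehat{\OPT} = O(\log k)\,\OPT$.

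\textbf{Upper bound.} Build a forest greedily scale by scale, in the style of Agrawal--Klein--Ravi or an embedding argument. At scale $r$, connecting each component internally costs at most the $\MST$ restricted to edges of length below $r$. Alternatively, one can argue via a random HST embedding (FRT on $2k$ terminals, distortion $O(\log k)$). On a tree metric, Steiner forest cost equals $\sum_{\text{edges}} w_e\cdot[\text{edge separates some pair}]$. Taking expectations over the HST, this is comparable up to constants to the scale-wise count of components cut by active pairs. This gives $\OPT \le O(1)\,\widehat{\OPT}$, possibly after constant-factor rescaling. I would aim to prove $\widehat{\OPT}=\Theta(\E[\text{tree SF cost}])$ directly, which gives both directions.

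\textbf{Main obstacle.} The hardest part is making the estimator computable in $\tO(k)$ time. We need component counts restricted to components containing an active pair, over all scales, and the number of scales must be bounded. For the scales, I would first estimate the MST of $T'$ with Czumaj--Sohler to get a value $W$. Scales below $W/k^2$ contribute negligibly, and scales above $W$ contribute nothing, which leaves $O(\log k)$ scales.

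For the counting, I would sample active pairs uniformly and run the Czumaj--Sohler BFS-style component exploration with threshold $r$ from the sampled terminal. Each component would be weighted by $1/(\text{number of active terminals in it})$, in the usual component-counting estimator. Exploration would be truncated at size $\approx 1/\eps$, paying an additive $\eps|A_r|$ error. That error must then be charged against $r|A_r|/\log k \lesssim \OPT$, which I expect to follow from the disjoint-ball argument above up to the $\log k$ slack. If the truncated BFS needs degree information not available in the matrix model, I would fall back to calling Czumaj--Sohler as a black box on the points $P_r$ with distances capped at $r$.
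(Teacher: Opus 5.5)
\textbf{Genuine gap: your estimator has no way to be computed in $\tO(k)$ time.} The quantity you target, $\sum_r r\,C_r$, is itself a valid $O(\log k)$-sandwich for $\OPT$ once the scales are normalized correctly. Your per-scale moat bound $\OPT\ge\Omega(rC_r)$ is sound. The reverse direction can be proved hierarchically: given that the scale-$r$ components are already connected, fewer than $C_r$ edges of length $<2r$ connect every scale-$2r$ component, and fewer than $C_r$ more such edges connect the demands of length in $[r,2r)$, plus $O(X)$ at the lowest scale. The ``MST restricted to edges below $r$'' bound you wrote is not $O(rC_r)$ and does not give this.

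Truncated BFS gives $C_r$ only up to additive $\eps|A_r|$, i.e.\ additive $\eps r|A_r|$ in the estimate. Your charge $r|A_r|\lesssim\OPT\log k$ is false. Take $k$ pairs whose $s_j$ all lie within $\eta$ of one point and whose $t_j$ all lie within $\eta$ of another, with all lengths in $[r,2r)$. Then $\OPT=O(r)$ for small $\eta$, while $r|A_r|=rk$. Only $\eps\approx 1/k$ would be chargeable, and that destroys the running time.

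Your fallback, Czumaj--Sohler on $(P_r,\min\{d,r\})$, returns $W_r=\int_0^r(c_t-1)\,dt$, not $C_r$. Differencing $(1\pm\eps)$-estimates of capped MSTs cannot isolate $C_r$, because the error $\eps W_r$ can dwarf $rC_r$. For example, a long unit-spaced chain of active terminals has $C_r=1$ but $W_r\gg r$.

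\textbf{The missing idea.} This is exactly the paper's route: stop trying to extract $C_r$ and use the capped MST weight $W_r$ itself as the per-scale quantity. That needs the bound $W_r\le 3\OPT$, which your moat argument does not give. The moat argument only controls $rC_r\le W_r+r$, and $W_r$ can be far larger than $rC_r$.

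The paper proves $W_r\le 3\OPT$ as follows. Take the $p$ components of an optimal forest that meet $A_i$. Each costs at least $\tau_i$, so $p\tau_i\le\OPT$. Shortcut each component onto its points of $A_i$, at cost at most twice its weight, and join the $p$ pieces by $p-1$ edges of capped weight at most $\tau_i$. Combined with the packing bound $M_i\tau_i\le 4W_i$ and a cluster-based feasible forest, this gives $\OPT/(24L+29)\le\max_i W_i\le 3\OPT$. Each $W_i$ is a single black-box call on at most $2k$ points.

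\textbf{Normalization error.} Normalizing by $W=\MST(T')$ is wrong for Steiner forest. Take $k$ unit-length pairs placed pairwise $\Delta\gg k$ apart. Then $\OPT=k$, but every nonzero contribution sits at scale at most $1<W/k^2$, so your estimate would be $0$. Normalize instead by $X=\max_j d(s_j,t_j)\le\OPT$, paying at most $X/2$ to set aside the demands of length at most $X/(2k)$.
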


The running time is optimal up to logarithmic factors,
as it is easy to see that no meaningful approximation is possible in $o(k)$ time.

We give a high-level overview of the idea in \cref{sec:overview-sf}.

\paragraph{Concurrent work.}
In concurrent and independent work, Kiss and
Kooroshnezhad~\cite{KissK26} obtained a $(1+\eps)$-approximation
to the size of a random-greedy MIS in the adjacency matrix model
in $\tO(n^{4/3}/\eps^2)$ time.
Via the reduction of~\cite{MRTV26}, this also gives an
$O(\log k)$-approximation for metric Steiner Forest in $\tO(n^{4/3})$
time, as in \cref{cor:sf-via-mis}.
Their algorithm follows a similar high-level strategy to ours:
it partially computes a random-greedy MIS and then simulates
the oracle of~\cite{YoshidaYamamotoIto09} on the implicit residual graph.
Their preprocessing stops after $\Theta(n^{4/3})$ queries,
whereas ours processes a fixed-length prefix.
To test whether a vertex is dominated, they inspect the partial
MIS in greedy order; their analysis applies the residual
maximum-degree bound of~\cite{BehnezhadDHSS19} at exponentially increasing sizes
of the partial MIS. We instead inspect the prefix MIS in a
fresh random order and use the relative-degree bound of
\cref{lem:reldeg}.
They additionally give an application to bi-criteria metric $k$-center
estimation.  On the other hand, they do not consider Steiner Forest beyond
the MIS-based reduction; our \cref{thm:sf} bypasses MIS altogether and
achieves $\tO(k)$ time.

\section{Technical Overview}
\label{sec:overview}

\subsection{A faster MIS size estimator}
\label{sec:overview-mis}
The $\tO(n^{3/2})$-time estimator of~\cite{MRTV26}
works by sampling a permutation $\pi$
and then
repeatedly sampling a \emph{root} vertex $v$
and deciding whether $v \in \RGMIS(\pi)$
using the recursive oracle of Yoshida, Yamamoto, and Ito~\cite{YoshidaYamamotoIto09},
which
recurses on the earlier neighbors of $v$ in increasing order, and answers yes
iff none of them is in.
In the adjacency matrix model,
apart from traversing edges,
one also needs to pay for querying non-neighbors.
The key technical ingredient of~\cite{MRTV26}
is a proof that the natural adjacency-matrix implementation has runtime $O(n)$
(see \cref{lem:published}(ii)).

The other main point of their strategy is to manually find the first $n^{1/2}$
vertices of $\RGMIS(\pi)$; whenever they find the next vertex, they explicitly
cross off all its surviving neighbors (in $O(n)$ time).
If $|\RGMIS(\pi)| \le n^{1/2}$, this constructs the entire MIS;
on the other hand, having $|\RGMIS(\pi)| \ge n^{1/2}$
means that $(1+\eps)$-estimating the MIS size now only requires sampling $\tO(n^{1/2})$ root vertices $v$.
The oracle can run on the explicitly materialized residual graph.
This gives their $\tO(n^{3/2})$ runtime.

In our approach,
we also pre-solve a prefix of $\pi$ of length $L$,
obtaining an MIS $S$ in the prefix.
However, crucially,
we \emph{do not} cross off the MIS neighbors beyond the prefix.
Thus our pre-solving phase runs in $O(L^2)$ time,
and we can afford to set $L=n^{2/3}$ and obtain $n^{4/3}$ runtime.
The remaining task then becomes to estimate MIS size in a \emph{residual} suffix subgraph $R$.
See \cref{fig:prefix} for an illustration.
The main difficulty of our approach
is that this residual graph is now \emph{implicit}.

The first key observation we make is that pre-solving the prefix
has a \emph{sparsification} effect on $R$.
Namely, with high probability,
the degree inside $R$ is at most $D := \tO(n/L) = \tO(n^{1/3})$.
To see this, consider any vertex $y \in V$,
and reveal the prefix vertices one by one.
As long as $y$ has high degree in the hitherto-residual graph
(which is the entire $V$ at the beginning and gradually shrinks),
at each of the $L$ steps
there is a high probability that a neighbor of $y$ will be selected into the MIS
and remove $y$ from the current residual graph.
Eventually either $y$ gets removed, or its residual degree decreases below $D = \tO(n/L)$.

As another observation,
this also implies that $|R|$ cannot be too large compared to the MIS size:
since every MIS-vertex in $R$ can disqualify only $D$ other $R$-vertices,
we have
\begin{equation}
    \label{eq:R-is-small}
    |R| \le (D+1) |\RGMIS(\pi)| .
\end{equation}
And the larger the MIS, the fewer (root vertex) samples we need.

Let us now look at the cost of running the oracle on the implicit residual graph.
If it ran on $G[R]$ that was explicitly materialized as an adjacency matrix,
the \cite{MRTV26} analysis (\cref{lem:published}(ii))
would give $O(|R|)$ runtime.
To get a multiplicative MIS size estimation, one needs $n/|\RGMIS(\pi)|$ (root vertex) samples,
of which $|R|/|\RGMIS(\pi)|$ will be in $R$;\footnote{The non-$R$ root vertex samples take negligible time.}
by \eqref{eq:R-is-small}, this is $O(D)$,
for a total of $O(|R|D)$.

However, since $R$ is implicit, when the oracle at a vertex $x \in R$ loops over
previous vertices $y$ in the permutation,
it incurs the additional costs of
\textit{(a)}~scanning non-edges $(x,y) \not \in E$ (for $y$ both in and not in $R$)
and
\textit{(b)}~checking, for each $y$ with $(x,y) \in E$, whether $y \in R$.
For~\textit{(a)},
the above analysis already counts the costs of scanning non-edges where $y \in R$;
since the positions of residual vertices in the permutation suffix are random,
we just pay an additional factor of $n/|R|$, for a total of $O(|R|D \cdot n/|R|) = O(Dn) = \tO(n^{4/3})$.

For~\textit{(b)}, our accounting is per-$y$.
It is known that the traffic a vertex receives from the oracle
is proportional to its degree:
$y$ is visited $O(|N(y) \cap R|/|R|)$ times
(see \cref{lem:calls}).
On each such visit, we must check whether $y \in R$.
We do this by checking, for every $s \in S$
(recall that $S$ is the prefix MIS),
whether $(s,y) \in E$.
The more $S$-neighbors $y$ has, the faster we can return $y \not \in R$:
this residual test takes expected time
$O(|S|/(1+|N(y) \cap S|))$.\footnote{To avoid probabilistic dependence issues, we iterate over $S$ in a fresh random order; see \cref{lem:cert-cost}.}
As before, we need to take $|R|/|\RGMIS(\pi)|$ root vertex samples from $R$.
Multiplying the three factors yields
\[
    \frac{|N(y) \cap R|}{|R|} \cdot \frac{|S|}{1+|N(y) \cap S|} \cdot \frac{|R|}{|\RGMIS(\pi)|}
    \le
    \frac{|N(y) \cap R|}{1+|N(y) \cap S|}
\]
where we used that $S \subseteq \RGMIS(\pi)$.
Now, for $y \in R$, this is at most the maximum degree of $G[R]$, so at most $D$ by the sparsification property above.
(Summing over $y$ we get $O(Dn) = \tO(n^{4/3})$.)
The case $y \not \in R$ remains.

The final key insight is that a \emph{relative-degree} version of the sparsification property
still holds even for $y \not \in R$:
we have $\frac{|N(y) \cap R|}{1+|N(y) \cap S|} \le D$
with high probability (\cref{lem:reldeg}).
(When revealing the prefix vertices one by one,
as long as this ratio is high for $y$,
the probability that a residual neighbor will be selected next is high;
this allows us to control the trajectory of the ratio.)
Thus, the more traffic a vertex $y \not \in R$ gets,
the less each visit costs;
and we get the same bound for $y \not \in R$
as for $y \in R$.

\subsection{Steiner Forest via capped spanning trees}
\label{sec:overview-sf}
We start with the framework of~\cite{MRTV26}, where the demand pairs are assigned to $O(\log k)$ levels based on their distance. That is, for a threshold $\tau_i$, a pair $(s_j,t_j)$ is assigned to level $i$ if $\tau_i \leq d(s_j,t_j)< 2\tau_i = \tau_{i+1}$. Then, roughly speaking, a budget of $M_i\tau_i$ is used to connect all the demand pairs assigned to level $i$, where $M_i$ is defined as the size of the MIS of some auxiliary graph $H_i$ defined on the clusters formed by satisfying all demand pairs of levels less than $i$. Then a sublinear MIS algorithm is used to estimate $M_i$, and produce an estimate for the total Steiner Forest cost by aggregating the estimates of all levels $i$.

\medskip

In our work, in contrast to \cite{MRTV26}, we do not estimate $M_i$ directly. While we still use it in our analysis, we instead estimate the \emph{minimum spanning tree} (MST) cost of a similar auxiliary graph $H_i'$ using the estimator of Czumaj and Sohler~\cite{CzumajSohler09}.
However, the distances we use are capped by the threshold $\tau_i$. More precisely, we replace each distance $d(u,v)$ by
$\rho_i(u,v)=\min\{d(u,v),\tau_i\}$
(which is again a metric), before passing it to the algorithm of~\cite{CzumajSohler09}.

Let $W_i$ be the weight of the MST (on the set of terminals) in this capped metric. On the one hand, a capped spanning tree of the graph can be extracted
from the optimal forest at a factor-$3$ loss, so $W_i\le 3\OPT$ at every
scale $i$.  On the other hand, the nodes in the MIS have pairwise distances at least $2\tau_i$
and thus even after capping, they still have pairwise distances $\tau_i$, which
forces $W_i\ge\Omega(M_i\tau_i)$. Together with the bound $\OPT\le O\!\big(\sum_i M_i\tau_i\big)$ of~\cite{MRTV26} this makes the
capped MSTs collectively large enough to recover the optimum.  Thus the
weights $W_i$, ranging over $O(\log k)$ scales, sandwich $\OPT$ to
within an $O(\log k)$ factor, and no MIS needs to be computed (nor its size estimated).

\section{Faster MIS Size Estimator}
\label{sec:mis}

In this section we prove \cref{thm:mis}.
Let $G=(V,E)$ be a simple undirected graph
with $|V|=n$, accessed by adjacency-matrix probes ``is $(u,v)\in E$?''.
For a
permutation $\sigma$ of the vertices of a graph $H$, the set
$\RGMIS_H(\sigma)$ is obtained by scanning the vertices in the order
$\sigma$ and selecting a vertex when none of its previously selected
neighbors was selected.
(We skip the subscript $H$ when clear from the context.)

\subsection{The random-greedy oracle and its known bounds}
\label{sec:oracle}

Membership in $\RGMIS_H(\sigma)$ is decided by the standard
recursive oracle~\cite{YoshidaYamamotoIto09}.  To evaluate $\Query(v)$, scan the vertices
preceding $v$ in $\sigma$ in increasing order; on meeting an earlier
neighbor $u$, recursively evaluate $\Query(u)$, returning
\textbf{false} as soon as one such call returns \textbf{true}, and
returning \textbf{true} if none does.  Then
$\Query(v)= $ \textbf{true} iff $v\in\RGMIS_H(\sigma)$.

Our analysis uses two previously established bounds for this
oracle.

\begin{theorem}[oracle bounds]
\label{lem:published}
Let $H$ be a graph on $r\ge 1$ vertices, let $\sigma$ be a uniform
permutation of $V(H)$, and let $v$ be an independent uniform root vertex.  For
the recursive oracle above:
\begin{enumerate}[label=\textup{(\roman*)},leftmargin=2.4em]
  \item for a fixed edge $e$ of $H$, the expected number of times
  the evaluation from $v$ traverses $e$, with multiplicity, is at most
  $1/r$;
  \item in the adjacency-matrix implementation that scans the preceding
  positions of $\sigma$ to find earlier neighbors, the expected total
  number of scanned positions, over all recursive calls, is $O(r)$.
\end{enumerate}
\end{theorem}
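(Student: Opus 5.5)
Both parts are imported from prior work, so the plan is to reduce each to the precise statement proved there and check that the hypotheses match: a uniform permutation $\sigma$, an independent uniform root $v$, and the scanning order in increasing rank.

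\textbf{Part (i).} I would invoke the central lemma of Yoshida, Yamamoto and Ito~\cite{YoshidaYamamotoIto09}. For a fixed edge $e$, the expected number of times $e$ is traversed, summed over \emph{all} possible roots, is at most $1$, with the expectation taken over $\sigma$. Their argument is a charging scheme over decreasing-rank paths ending in $e$, and it yields exactly this summed-over-roots statement. A uniform root independent of $\sigma$ picks each of the $r$ roots with probability $1/r$, so averaging gives the bound $1/r$. The only point to verify is that "traversal with multiplicity" in our oracle matches their count. It does, because our oracle scans earlier neighbors in increasing order and stops at the first \textbf{true}, which is exactly their implementation.

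\textbf{Part (ii).} This is the main technical theorem of~\cite{MRTV26}, and I would cite it as stated. To explain why it holds and how it fits with (i), I would decompose the scan cost of a call $\Query(x)$. The call scans positions $1,\dots,p$ of $\sigma$, where $p$ is the position of the first earlier neighbor returning \textbf{true}, or $p$ is the rank of $x$ if no such neighbor exists. The scanned positions split into two groups:
\begin{itemize}
\item neighbors of $x$, each of which is a traversed edge; by (i), summed over the at most $r^2/2$ edges, these contribute $O(\text{number of edges}/r)$ in expectation;
\item non-neighbors of $x$ at positions before $p$, which are the genuinely new cost.
\end{itemize}

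\textbf{Main obstacle.} The non-neighbor term is the crux, because a low-degree vertex $x$ with a late stopping point can scan many empty positions. The argument of~\cite{MRTV26} controls this by conditioning on the relative order and using the randomness of the positions: the expected position of the stopping neighbor is about $r/\deg(x)$ times the number of earlier neighbors inspected. This cost is then charged against the traversal counts from (i). Since edge traffic to $x$ is proportional to $\deg(x)/r$, the charge per vertex is $O(1)$, and the total is $O(r)$.

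I would not re-derive this balancing. I would cite~\cite{MRTV26} and confirm only that their model is ours: the same probe model (adjacency matrix, one probe per scanned position) and the same oracle.
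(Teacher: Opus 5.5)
Your proposal matches the paper's proof. The paper also proves this purely by citation: part (i) from Lemma~2.3 of~\cite{YoshidaYamamotoIto09}, where averaging the summed-over-roots edge bound over a uniform root gives $1/r$, and part (ii) as Lemma~3.8 of~\cite{MRTV26}. Your heuristic for (ii) is not a proof on its own, since it ignores the dependence between the positions of $x$'s neighbors and the course of the recursion, which the cited lemma handles; but because you defer to that citation exactly as the paper does, this is fine.
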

\begin{proof}
Part~(i) follows from Lemma~2.3 of~\cite{YoshidaYamamotoIto09}.
A different proof was also given in~\cite{DalirrooyfardMM26}.
Part~(ii) is Lemma 3.8 of~\cite{MRTV26}.
\end{proof}

\subsection{The random prefix and the residual graph}
\label{sec:prefix}

Fix an integer $L\in\{1,\dots,n\}$, to be chosen later.  Draw a uniform
permutation $\pi$ of $V$ and let $P=\{\pi(1),\dots,\pi(L)\}$ be its
\emph{prefix}.  Query every unordered pair of vertices in $P$
(this uses $O(L^2)$ probes and touches no pair with an endpoint outside
$P$), run greedy MIS on $P$ in the order given by $\pi$, and let $S$
be the selected set.  Put
\[
  R=V\setminus N[S],
\]
where $N[S]$ denotes the closed neighborhood of $S$ (i.e., $S \subseteq N[S]$).  Every vertex of $P$ is
either selected into $S$ or rejected by an earlier selected neighbor, so
$P\cap R=\varnothing$.
See \cref{fig:prefix} for an illustration.
We start with the following easy observation.

\begin{observation}[prefix decomposition]
\label{lem:decomp}
Let $\tau=\pi|_R$ be the order that $\pi$ induces on $R$.  Then
$\RGMIS_G(\pi)=S\dcup\RGMIS_{G[R]}(\tau)$.  In particular the right-hand
side is a maximal independent set of $G$ for every $\pi$.
\end{observation}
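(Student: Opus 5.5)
We will prove the decomposition by following the greedy scan of $\pi$ over all of $V$ and splitting it into the prefix phase and the suffix phase. Throughout, let $I=\RGMIS_G(\pi)$, and let $I_t$ be the set of vertices selected among the first $t$ positions of $\pi$.

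\textbf{Step 1 (prefix agreement).} Greedy run on $G$ in the order $\pi$ and greedy run on $G[P]$ in the order $\pi|_P$ make the same decision at every step $t\le L$. The decision at $\pi(t)$ depends only on whether $\pi(t)$ has a neighbor in $I_{t-1}$, and $I_{t-1}\subseteq P$. By induction on $t$, this gives $I\cap P=I_L=S$.

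\textbf{Step 2 (suffix).} Every vertex of $N(S)\setminus P$ is rejected when scanned, since it has a neighbor in $S\subseteq I$ that was selected earlier. Every vertex of $P$ has already been decided. So after position $L$, the only vertices whose fate is still undetermined are those in $R=V\setminus N[S]$. For $x\in R$, the greedy rule says: select $x$ iff no earlier neighbor of $x$ is in $I$. Neighbors of $x$ outside $R$ lie in $N(S)$, because $x\notin N(S)$ excludes $S$ from its neighborhood. Vertices of $N(S)$ are never selected: those in $P$ by Step 1, those outside $P$ by the argument above. Hence the rule for $x$ reduces to: no earlier $R$-neighbor of $x$ is in $I\cap R$.

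\textbf{Step 3 (matching with $\RGMIS_{G[R]}(\tau)$).} The rule from Step 2 is exactly the greedy rule for $G[R]$ under the order $\tau=\pi|_R$. Induction along $\tau$ then gives $I\cap R=\RGMIS_{G[R]}(\tau)$. Since $V=P\cup (N(S)\setminus P)\cup R$ and only $S$ and $I\cap R$ contribute to $I$, we get $I=S\dcup\RGMIS_{G[R]}(\tau)$. The union is disjoint because $S\cap R=\varnothing$.

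\textbf{Step 4 (maximality).} $\RGMIS_G(\pi)$ is a maximal independent set of $G$ by the definition of greedy, and the right-hand side equals it for every $\pi$, so the right-hand side is a maximal independent set of $G$.

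The only delicate point is in Step 2: we must check that vertices outside $R$ never block a vertex of $R$. This holds because all such vertices lie in $N(S)$ and are never selected.
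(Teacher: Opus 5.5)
Your proof is correct and is the natural scan-splitting argument the paper intends; the paper records this as an easy observation and writes out no proof. The one tacit step is in Step 2: vertices of $N(S)\cap P$ are unselected because $I\cap P=S$ and $S\cap N(S)=\varnothing$, since $S$ is independent.
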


For $y\in V$ put $a_y=|N(y)\cap S|$ and $b_y=|N(y)\cap R|$. The following
is the key sparsification property
from which the random prefix lets us benefit.

\begin{figure}[t]
\centering
\definecolor{misBlue}{RGB}{31,90,160}
\definecolor{misRed}{RGB}{178,45,40}
\definecolor{misLightBlue}{RGB}{176,208,236}
\definecolor{misLightGreen}{RGB}{198,224,196}
\begin{tikzpicture}[
  vtx/.style={circle,draw,fill=black,inner sep=1.7pt},
  aedge/.style={misBlue,line width=1.1pt},
  bedge/.style={misRed,line width=1.1pt},
  gedge/.style={black!60,line width=0.6pt}]

  \foreach \i in {1,2,5}
    \fill[misLightBlue] ({-6.2+0.75*(\i-1)},-0.3) rectangle ({-6.2+0.75*\i},0.3);
  \draw[black!70] (-6.2,-0.3) rectangle (-1.7,0.3);
  \foreach \i in {1,...,5}
    \draw[black!70] ({-6.2+0.75*\i},-0.3) -- ({-6.2+0.75*\i},0.3);
  \foreach \i in {1,...,6}
    \coordinate (t\i) at ({-6.2+0.75*(\i-0.5)},0.3);
  \foreach \i in {4,6}
    \coordinate (u\i) at ({-6.2+0.75*(\i-0.5)},-0.3);
  \foreach \i in {1,2,5}
    \node[misBlue,font=\scriptsize] at ({-6.2+0.75*(\i-0.5)},-0.52) {$\in S$};
  \node[font=\small] at (-3.95,-1.6) {prefix $P=\{\pi(1),\dots,\pi(L)\}$};
  \draw[gedge] (t1) to[bend left=45] (t4);
  \draw[gedge] (t2) to[bend left=45] (t3);
  \draw[gedge] (t3) to[bend left=45] (t4);
  \draw[gedge] (t5) to[bend left=45] (t6);

  \draw[black!35,line width=0.8pt] (-1.3,-3.1) -- (-1.3,3.8);

  \begin{scope}[on background layer]
    \fill[misLightGreen!70] plot[smooth cycle] coordinates
      {(-0.4,-0.7) (0.7,0.25) (2.6,0.35) (3.6,-0.5) (2.8,-2.3) (0.6,-2.4)};
  \end{scope}
  \draw[dashed,black!70,thick] plot[smooth cycle] coordinates
    {(-0.4,-0.7) (0.7,0.25) (2.6,0.35) (3.6,-0.5) (2.8,-2.3) (0.6,-2.4)};

  \node[vtx,label={[font=\small]above:$y$}] (y) at (0.3,1.9) {};
  \node[vtx] (z)  at (2.15,2.6) {};
  \node[vtx] (r1) at (0.6,-0.6) {};
  \node[vtx] (r2) at (2.6,-0.5) {};
  \node[vtx] (r3) at (1.5,-1.85){};

  \draw[gedge] (y) -- (z);
  \draw[gedge] (t1) to[bend left=52] (z);
  \draw[gedge] (t6) to (y);
  \draw[gedge] (u6) to[bend right=10] (r1);
  \draw[gedge] (u4) to[bend right=20] (r3);
  \draw[gedge] (r1) -- (r3);
  \draw[gedge] (r2) -- (r3);
  \draw[gedge] (r2) to[bend right=90] (z);

  \draw[aedge] (t2) to[bend left=35] (y);
  \draw[aedge] (y) to[bend right=30] (t5);
  \draw[bedge] (y) -- (r1);
  \draw[bedge] (y) -- (r2);

  \draw[misBlue,line width=0.9pt] (0.3,1.9) ++(140:0.85) arc (140:204:0.85);
  \draw[misRed,line width=0.9pt]  (0.3,1.9) ++(262:0.85) arc (262:328:0.85);

  \node[misBlue,font=\small] at (0.3,2.8) {$a_y=|N(y)\cap S|$};
  \node[misRed,font=\small,anchor=west] at (1.2,1.2) {$b_y=|N(y)\cap R|$};
  \node[font=\small] at (1.6,-2.85) {$R=V\setminus N[S]$};
\end{tikzpicture}
\caption{We greedily pre-solve the length-$L$ prefix $P=\{\pi(1),\dots,\pi(L)\}$ (the array on the left).
Blue shaded cells denote the
selected set $S$; every other cell is rejected by an edge from an earlier
selected cell.
To the right of the long vertical line is the suffix;
the \emph{residual graph} $R$ (green shaded) are those suffix vertices that have no edge to $S$.
\\
In the analysis
we often condition on the prefix $P$
while keeping the suffix order  random.
Note that $S$ and $R$ are determined by the prefix.
The algorithm does not explicitly materialize $R$.
\\
For the suffix vertex $y$ shown here,
we illustrate its relative degrees $a_y$ and $b_y$.
\cref{lem:reldeg} says that $b_y \le D (a_y+1)$ with high probability. This implies that $G[R]$ has degree $\le D$.}
\label{fig:prefix}
\end{figure}

\begin{lemma}[relative-degree bound]
\label{lem:reldeg}
With
\[
  D=\min\!\Big\{n,\ \frac{30n\log n}{L}\Big\},
\]
the prefix satisfies $b_y\le D(a_y+1)$ for every $y\in V$ with probability
at least $1-n^{-13}$.  On this event, the maximum degree of $G[R]$ is at most $D$.
\end{lemma}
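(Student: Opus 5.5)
We fix a vertex $y\in V$, prove $b_y\le D(a_y+1)$ with failure probability at most $n^{-14}$, and finish with a union bound over the $n$ choices of $y$. If $D=n$ there is nothing to show, since $b_y\le n-1$; so assume $L\ge 30\log n$ and $D=30n\log n/L$. The degree statement is then immediate: for $y\in R$ we have $a_y=0$ (no neighbor in $S$), hence $\deg_{G[R]}(y)=b_y\le D$.

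To analyse a fixed $y$, we reveal the prefix one position at a time. After $t$ steps let $S_t$ be the greedy set on $\pi(1),\dots,\pi(t)$ and $R_t=V\setminus(N[S_t]\cup\{\pi(1),\dots,\pi(t)\})$; this is the set of still-undecided vertices, and every vertex of $R_t$ that is drawn next enters $S_{t+1}$. Put $a_t=|N(y)\cap S_t|$ and $b_t=|N(y)\cap R_t|$. The sets $R_t$ only shrink, and $R=R_L\setminus\{\text{suffix vertices killed... }\}$, more precisely $R\subseteq R_L$ up to the fact that $R_L$ already excludes the prefix. Hence $b_y\le b_L$ and $a_y=a_L$. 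Also, $a_t$ is nondecreasing and $b_t$ is nonincreasing.

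Call step $t+1$ \emph{bad} if $b_t>D(a_t+1)$. Given the history, $\pi(t+1)$ is uniform among the $n-t$ unrevealed vertices, so at a bad step it lands in $N(y)\cap R_t$ with probability at least
\[
\frac{b_t}{n}>\frac{D(a_t+1)}{n}\ge\frac{D}{n}.
\]
When this happens, $a$ increases by $1$. If the final state satisfies $b_L>D(a_L+1)$, then every step was bad, because $b_t\ge b_L$ and $a_t\le a_L$. During those $L$ steps, $a$ went from $0$ up to $a_L$ while every step had success probability above $D(a_t+1)/n$.

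The main work is a concentration argument, and the difficulty is that $a_L$ is not bounded in advance. I would use a phase argument. In phase $j$ the counter sits at $a=j$, each step succeeds with probability at least $D(j+1)/n$, and so the phase length is dominated by a geometric random variable with mean at most $n/(D(j+1))$. Ending with $a_L=m$ and all steps bad requires that phases $0,\dots,m-1$ finish and that phase $m$ survives for the remaining steps. Phase $0$ alone has mean length $n/D=L/(30\log n)$, and the same holds in aggregate for any geometric-weighted horizon.

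I would make this precise through a potential argument. Consider $\Phi_t=\log(a_t+1)-\sum_{s<t}\frac{D}{n}$, or more simply compare with a Poisson/Yule-type process: the counter $a$ grows at least as fast as a pure-birth process with rates $D(a+1)/n$. For such a process, $\Pr[a_L\le m]$ over $L$ steps with $DL/n=30\log n$ is the probability that a Yule process with total time $30\log n$ stays at or below $m$. That probability is $(1-e^{-30\log n})^{m+1}$, which is not small for large $m$, so this route is too weak on its own.

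The fix is to also use that $b$ must stay above $D(m+1)$. Since $a$ increases only when a vertex of $N(y)\cap R_t$ is drawn, and each such draw also removes one vertex from $b$, we have $b_L\le b_0-a_L\le n-m$. So $m<n/D$, and for $m$ near that bound the rates are large. Because all steps are bad, $\Pr[\text{success at }t]\ge D(a_t+1)/n\ge D/n$ at every step, so the number of successes stochastically dominates $\mathrm{Bin}(L,D/n)$, whose mean is $30\log n$. I expect the correct way to close the argument is to show that at each bad step, the conditional probability that the ratio $b/(a+1)$ drops by a constant factor is at least a constant times $b_t/n$. With $\sum_t b_t/n\ge L D/n=30\log n$, the ratio would have to survive roughly $30\log n$ independent-ish halving opportunities, giving failure probability $2^{-\Omega(30\log n)}\le n^{-14}$ via Freedman's inequality. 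Making this trajectory argument rigorous is the step I expect to be the main obstacle.
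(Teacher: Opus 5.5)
\textbf{There is a genuine gap: the proposal does not finish the proof.} Your setup matches the paper's. You expose the prefix one vertex at a time. You observe that a bad final state forces every earlier state to be bad, since $a_t$ only grows and $b_t$ only shrinks. You note that at a bad step $a$ increases with probability greater than $q(a_t+1)$, where $q=D/n$. But your closing ``ratio halving plus Freedman'' step is only a hope, as you acknowledge yourself.

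Your reason for discarding the Yule comparison is also a computational slip. For a Yule process with per-individual rate $q$ started from one individual, the population $a+1$ at time $L$ is geometric with parameter $e^{-qL}=n^{-30}$. So $(1-e^{-qL})^{m+1}$ is the probability that $a$ \emph{exceeds} $m$. The probability that $a$ stays at or below $m$ is $1-(1-e^{-qL})^{m+1}\le (m+1)e^{-qL}=(m+1)n^{-30}$, which is tiny for every $m<n$. Since $a_L\le\deg(y)<n$ always, that route essentially already sufficed: staying bad for $L$ steps would force $a$ to grow like $e^{qL}=n^{30}$, which is impossible.

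The missing ingredient is a discrete potential. It avoids coupling the discrete process with a continuous one, and it avoids having to stop the comparison once badness ends. Set $Z_t=\mathbf 1[b_t>D(a_t+1)]/(a_t+1)$ and let $p_t>q(a_t+1)$ be the success probability. Badness at time $t+1$ implies badness at time $t$, and a non-success leaves $a$ unchanged. Since the middle expression below is decreasing in $p_t$, you get
\[ \E_t[Z_{t+1}]\le \mathbf 1[b_t>D(a_t+1)]\Big(\frac{p_t}{a_t+2}+\frac{1-p_t}{a_t+1}\Big)\le \mathbf 1[b_t>D(a_t+1)]\Big(\frac{1}{a_t+1}-\frac{q}{a_t+2}\Big)\le \Big(1-\frac q2\Big)Z_t . \]
Iterating gives $\E[Z_L]\le e^{-qL/2}=n^{-15}$. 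Since $Z_L\ge 1/n$ whenever the final state is bad, Markov's inequality gives failure probability $n^{-14}$ for each $y$. After that, your union bound and your deduction of the degree bound for $y\in R$ go through. No concentration inequality is needed, and the possibility that $a_L$ is large is harmless, because the potential $1/(a+1)$ never drops below $1/n$.
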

\begin{proof}
The case $n=1$ is immediate, so assume $n\ge2$.
Fix $y$.
We expose the first $L$ positions one at a time.  After $t$ positions have
been processed, let $S_t$ be the vertices selected so far and
$R_t=V\setminus N[S_t]$.  Set $A_t=1+|N(y)\cap S_t|$ and
$B_t=|N(y)\cap R_t|$.
We define $\cE_t$ to be the event that $B_t > DA_t$.
Our goal is to show that $\Prb(\cE_L)$ is very small.
To that end,
we analyze the random variable $Z_t=\mathbf 1_{\cE_t}/A_t$
and how it decreases in expectation as $t$ increases.

If $D=n$, the event $\cE_L$ is impossible; so assume $D<n$ and put
$q=D/n$.

We condition on the state after time $t$,
i.e., on the first $t$ exposed vertices.
If $\cE_t$ holds,
the next unexposed vertex lies in $N(y)\cap R_t$ with probability
$p_t=\frac{B_t}{n-t}>\frac{DA_t}{n}=qA_t$
(all vertices in $R_t$  are not yet exposed).
In that case we have $A_{t+1}=A_t+1$ (that vertex gets selected)
and $Z_{t+1} \le 1/(A_t+1)$,
otherwise $A_{t+1}=A_t$
and $Z_{t+1} \le 1/A_t$.
(If $\cE_t$ does not hold, then $Z_{t+1} = 0$
because $A_t$ is nondecreasing and $B_t$ is nonincreasing.)

Thus,
writing $\E_t$ for conditional expectation after time $t$, we have
\[
  \E_t[Z_{t+1}]
  \le\mathbf 1_{\cE_t}\cdot\left(p_t \cdot \frac{1}{A_t+1}+(1-p_t) \cdot \frac{1}{A_t}\right)
\]
and
since the bound
is decreasing in $p_t$, we may replace $p_t$ by the smaller $qA_t$:
\[
  ... \le
  \mathbf 1_{\cE_t}\cdot\left(\frac{qA_t}{A_t+1}+\frac{1-qA_t}{A_t}\right)
  =\mathbf 1_{\cE_t}\cdot\left(\frac1{A_t}-\frac{q}{A_t+1}\right)
  \le(1-q/2)Z_t .
\]
Then, by induction,
we have
\begin{align*}
\E[Z_L] &= \E[\E_{L-1}[Z_L]] \le (1-q/2)\E[Z_{L-1}] \le \\ &\le ... \le (1-q/2)^L \E[Z_0] \le e^{-qL/2} \E[Z_0] \le e^{-qL/2} = e^{-DL/(2n)} = e^{-15\log n} = n^{-15} .
\end{align*}
Now, $Z_L$ is either 0 or at least $1/n$,
and the latter happens iff $\cE_L$ holds, so by Markov,
\[ \Prb(\cE_L) = \Prb(Z_L\ge 1/n)\le \E[Z_L] / (1/n) \le n^{-14} .
\]
A union bound over the $n$ choices of $y$
gives the claimed failure probability $n^{-13}$.

Finally, for $y\in R$ we have $a_y=0$, so
$b_y=\deg_{G[R]}(y)\le D$.
\end{proof}

\subsection{The algorithm}
\label{sec:implicit}

Now we are ready to give the rest of the algorithm.
For clarity we also present it as pseudocode -- see \crefrange{alg:trial}{alg:queryr}.

A single run of our algorithm
(\textsf{Trial} -- \cref{alg:trial})
builds the prefix
(\textsf{Prefix} -- \cref{alg:prefix})
and then performs inverse sampling using
a membership oracle
(i.e., it draws/samples root vertices $x \in V$ and checks whether $x$ is in the MIS, until $\tO(1)$ successes, then uses the number of draws to estimate the MIS size).

This membership oracle
(\textsf{Member} -- \cref{alg:member})
first handles the cases $x \in P$ and $x \not \in R$;
if $x \in R$, it then launches the \emph{residual recursion} oracle
($\Query_R$ -- \cref{alg:queryr}).

The algorithm never materializes $R$.
For a suffix vertex $y\in V\setminus P$,
whenever we need to check whether $y \in R$,
we call the \emph{residual test} $\Residual(y)$
(\cref{alg:residual}),
which
examines the vertices of $S$ in a fresh uniformly
random order and returns \textbf{false} at the first neighbor,
\textbf{true} only if none is found; thus $\Residual(y) =$ \textbf{true} iff $y\in R$.

For a vertex $x$ already known to lie in $R$, define $\Query_R(x)$ (\cref{alg:queryr})
recursively: scan the suffix positions
$L+1,\dots,\rank_\pi(x)-1$
(we call these \emph{raw} positions to differentiate them from residual positions).
For the vertex $y$ in a scanned position, probe
whether $(x,y)\in E$, and if so run $\Residual(y)$, recursing on $y$ only if
the test returns \textbf{true}; return \textbf{false} as soon as a
recursive call returns \textbf{true}, and \textbf{true} if the scan
finishes.

Finally, to obtain our running time bound deterministically,
we run $\tO(1)$ copies of the algorithm
(i.e., of \textsf{Trial})
to amplify the runtime guarantee.
This is routine and we defer it to \cref{sec:accuracy};
our final estimator is \cref{alg:estimate} there.

\begin{figure}[p]
\small
\setlength{\intextsep}{2pt}
\begin{algorithm}[H]
\caption{One \emph{trial}: fresh permutation, fresh prefix, then inverse sampling.}
\label{alg:trial}
\begin{algorithmic}[1]
\Procedure{Trial}{}
  \State $Q\gets\Theta(\eps^{-2}\log(2n))$, \ $\delta\gets\eps/3$
  \State draw a uniform permutation $\pi$ of $V$
  \State $S\gets{}$\Call{Prefix}{$\pi$}
  \State $\mathsf{draws}\gets 0$, \ $\mathsf{successes}\gets 0$
  \While{$\mathsf{successes}<Q$}
    \State draw $x$ uniformly from $V$; \ $\mathsf{draws}\gets\mathsf{draws}+1$
    \If{\Call{Member}{$x$}}
      \State $\mathsf{successes}\gets\mathsf{successes}+1$
    \EndIf
  \EndWhile
  \State \Return $\widehat M=\min\{n,\ nQ/((1-\delta)\,\mathsf{draws})\}$
\EndProcedure
\end{algorithmic}
\end{algorithm}

\begin{algorithm}[H]
\caption{The random prefix of \cref{sec:prefix}.}
\label{alg:prefix}
\begin{algorithmic}[1]
\Procedure{Prefix}{$\pi$}
  \State $L\gets\lceil n^{2/3}\rceil$;
  $P\gets\{\pi(1),\dots,\pi(L)\}$
  \State probe $(u,v)$ for every pair $u,v\in P$
    \Comment{\emph{prefix probes}: $O(L^2)$}
  \State $S\gets\varnothing$
  \For{$i=1,\dots,L$}
    \If{$\pi(i)$ has no neighbor in $S$}
      \State $S\gets S\cup\{\pi(i)\}$
    \EndIf
  \EndFor
  \State \Return $S$
\EndProcedure
\end{algorithmic}
\end{algorithm}

\begin{algorithm}[H]
\caption{The membership oracle.}
\label{alg:member}
\begin{algorithmic}[1]
\Procedure{Member}{$x$}
  \If{$x\in S$} \State \Return \textbf{true} \EndIf
  \If{$x\in P \setminus S$} \State \Return \textbf{false} \EndIf
  \If{\textbf{not} \Call{Residual}{$x$}}
      \Comment{\emph{top-level $\Residual$ test}}
    \State \Return \textbf{false}
  \EndIf
  \State \Return \Call{Query$_R$}{$x$}
    \Comment{$x\in R$: enter the residual recursion at \emph{root} $x$}
\EndProcedure
\end{algorithmic}
\end{algorithm}

\begin{algorithm}[H]
\caption{The survival test of \cref{sec:implicit}: is $y\in R$?
Uses fresh randomness every time (see \cref{lem:cert-cost} for implementation details).}
\label{alg:residual}
\begin{algorithmic}[1]
\Procedure{Residual}{$y$}
  \For{$v\in S$ in a fresh uniformly random order}
    \If{$(y,v)\in E$} \State \Return \textbf{false} \EndIf
  \EndFor
  \State \Return \textbf{true}
    \Comment{$y\in R$}
\EndProcedure
\end{algorithmic}
\end{algorithm}

\begin{algorithm}[H]
\caption{The residual recursion: random-greedy membership inside $G[R]$,
but run over the raw suffix since we do not materialize $R$.}
\label{alg:queryr}
\begin{algorithmic}[1]
\Procedure{Query$_R$}{$x$} \Comment{precondition: $x\in R$}
  \For{$j=L+1,\dots,\rank_\pi(x)-1$}
    \Comment{\emph{raw suffix scan}}
    \State $y\gets\pi(j)$
    \If{$(x,y)\in E$} \Comment{one \emph{raw suffix probe} per position $j$}
      \If{\Call{Residual}{$y$}}
        \Comment{$\Residual$ test at a scanned neighbor}
        \If{\Call{Query$_R$}{$y$}} \State \Return \textbf{false}
          \EndIf
      \EndIf
    \EndIf
  \EndFor
  \State \Return \textbf{true}
\EndProcedure
\end{algorithmic}
\end{algorithm}
\end{figure}

We begin the runtime analysis from the $\Residual$ test.

\begin{lemma}[cost of a $\Residual$ test]
\label{lem:cert-cost}
Conditioned on the prefix, $\Residual(y)$ can be implemented to run in
expected time $O\big(|S|/(a_y+1)\big)$.
\end{lemma}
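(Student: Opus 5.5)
The proof has two parts: an implementation of the fresh random order that costs $O(1)$ per examined element, and an expectation computation for the first-success position in a random order. Condition on the prefix, so $S$ is fixed with $s:=|S|$, and for the given $y$ the set $N(y)\cap S$ is fixed with $a:=a_y$ elements.

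\emph{Implementation.} Store $S$ in an array once, when \textsf{Prefix} finishes; this is a one-time cost of $O(L)$. A call to $\Residual(y)$ then generates a uniformly random order of $S$ lazily, by a partial Fisher--Yates shuffle. At step $i$ it swaps a uniformly random index from $\{i,\dots,s\}$ into position $i$ and probes $(y,\cdot)$ for the element now at position $i$. Each step costs $O(1)$ time and one probe. To avoid an $O(s)$ reset per call, undo the recorded swaps in reverse order when the call ends, which costs as many steps as were performed. Alternatively, sample indices without replacement and keep a hash set of indices already used. Either way, the running time is $O(1+T)$, where $T$ is the number of elements of $S$ examined before the call returns.

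\emph{Expectation.} If $a=0$, the test scans all of $S$, so $T=s$ and the bound $O(s/(0+1))$ holds trivially. If $a\ge1$, then $T$ is the position of the first element of $N(y)\cap S$ in a uniformly random permutation of $s$ items, $a$ of which are ``marked''. The standard formula gives $\E[T]=(s+1)/(a+1)$. To verify it, note that each of the $s-a$ unmarked items precedes all marked items with probability $1/(a+1)$, so by linearity $\E[T]=1+(s-a)/(a+1)=(s+1)/(a+1)$. Hence the expected time is $O(1+(s+1)/(a+1))=O(s/(a+1))$ whenever $s\ge1$. We always have $s\ge1$, because $\pi(1)\in S$.

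\emph{Independence.} Each call draws randomness that is independent of the prefix and of all other calls. So the bound holds conditioned on the prefix, and also conditioned on everything the algorithm has done so far. This is precisely why the order is fresh on every call.

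The only delicate step is the implementation detail: a naive shuffle costs $\Theta(s)$ per call. The lazy shuffle with swap-undo, or sampling without replacement via hashing, avoids this cost.
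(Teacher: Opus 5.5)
Your proposal is correct and follows the paper's proof essentially step for step: a static array holding $S$, a lazy partial Fisher--Yates shuffle whose swaps are undone in reverse order, and the first-marked-position expectation $(|S|+1)/(a_y+1)$, with your linearity-of-expectation derivation and the observation $|S|\ge 1$ (since $\pi(1)\in S$) as welcome additions. One caveat on your aside: if the hash-set alternative means rejection-sampling unused indices, then a full scan (the case $a_y=0$) costs $\Theta(|S|\log|S|)$ by coupon collecting, so only the swap-undo implementation (or a hash-map simulation of Fisher--Yates) actually delivers the stated bound.
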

\begin{proof}
If $a_y\ge 1$, the number of inspected elements is the position of the
first marked element in a uniform arrangement of $|S|$ elements with $a_y$
marked, whose expectation is $(|S|+1)/(a_y+1)$.  If $a_y=0$, the test inspects
all $|S|$ elements.

It remains to see that an inspected element costs only $O(1)$ time, i.e.\
that the random order need not be generated in advance.  Keep one static array
$A[1..|S|]$ holding $S$, built once per $\mathsf{Trial}$ at cost $O(|S|)=O(L)$,
which is negligible.  A test produces its
order by a partial Fisher--Yates shuffle: at step $i$ it draws a uniform
$j\in\{i,\dots,|S|\}$, swaps $A[i]$ with $A[j]$, and inspects $A[i]$.  The
first $k$ elements produced this way are distributed exactly as the first
$k$ elements of a uniformly random order of $S$, so a test that stops after
$k$ inspections never has to touch the rest.  To leave $A$ ready for the
next test, the test remembers the indices $j$ it drew and, on return, undoes
its swaps in reverse order.
\end{proof}

The next two lemmas bound how often the recursion tests a given suffix
vertex, and how many positions one evaluation scans in total.

\begin{lemma}[$\Residual$ tests of a suffix vertex]
\label{lem:calls}
Fix a prefix with $R\ne\varnothing$, and fix a suffix vertex
$y$ with $b_y=|N(y)\cap R|$.  Evaluate $\Query_R$ at a uniform root of $R$,
with the order of $R\cup\{y\}$ uniform.  Then the expected number of $\Residual(y)$ calls made
during that evaluation, counted with multiplicity, is $O(b_y/|R|)$.
\end{lemma}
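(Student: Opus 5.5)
Conditioned on the prefix, $\Query_R$ at a root $x\in R$ is exactly the Yoshida--Yamamoto--Ito oracle run on $H=G[R]$ with order $\tau=\pi|_R$, which is uniform. The reason is that scanned non-neighbors of $x$, and neighbors failing $\Residual$, never cause recursion. So the recursive calls are precisely the oracle's calls on $G[R]$.

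A call $\Residual(y)$ occurs inside $\Query_R(x')$ for some $x'\in R$ exactly when the scan of $x'$ reaches $y$ and $(x',y)\in E$. We split into two cases.

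\emph{Case $y\in R$.} The call $\Residual(y)$ at $x'$ happens exactly when the oracle on $G[R]$ traverses the edge $\{x',y\}$ in the direction $x'\to y$. Summing \cref{lem:published}(i) over the $b_y$ edges of $G[R]$ at $y$ bounds the expected number of calls by $b_y/|R|$.

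\emph{Case $y\notin R$.} Here $y$ is not a vertex of $H$, so we augment. Let $H'=G[R\cup\{y\}]$, keeping only the edges from $y$ to $R$, so $y$ has degree $b_y$ in $H'$. Make $y$ \emph{inert}: $\Residual(y)$ returns false, so $y$ never recurses. Our process is then the oracle on $H'$ with order uniform on $R\cup\{y\}$, as assumed in the statement, except that $y$ is forced to answer false and the root is uniform on $R$ rather than on $R\cup\{y\}$.

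For the root, we compare with a root uniform on $R\cup\{y\}$. That root lies in $R$ with probability $|R|/(|R|+1)\ge 1/2$, so any expectation over the uniform-on-$R$ root is at most $2$ times the corresponding expectation over the uniform-on-$(R\cup\{y\})$ root.

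For inertness, the cleanest fix is to delete $y$'s answer entirely. Instead consider the graph $H''$ in which $y$ is present but its answer is false. This is the same as the true oracle on $H'$ in which $y$ additionally has a dummy earlier neighbor that is always in. Rather than chase that, the simplest route is to reuse the charging argument behind \cref{lem:published}(i) (Yoshida et al.'s Lemma~2.3). That argument bounds the expected number of traversals of a fixed edge $\{x',y\}$ by $1/r$ for each direction, using only the query/ordering structure along paths of decreasing rank. A query at $y$ that terminates immediately (with $y$ as a leaf) only shortens paths, so it can only decrease the counts. I would verify this monotonicity carefully: the traversal count of the edge into $y$ depends only on the calls to $x'$, which are determined by vertices ranked before $y$ or by the answers of other branches. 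Making $y$ answer false can change other vertices' answers, and this is the main obstacle.

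The fallback for that obstacle is to note that, in the YYI analysis, the expected number of times a path $u_1\to\cdots\to u_k=y$ of decreasing ranks is explored is at most $1/k!$. This bound holds regardless of the answer returned at $y$, since exploration reaching $y$ is determined by prefixes of the path, and only the last step enters $y$. Summing over paths ending with the edge $\{x',y\}$ gives at most $1/(|R|+1)$ per edge. Summing over the $b_y$ edges at $y$ and multiplying by the factor $2$ from the root comparison gives $O(b_y/|R|)$.
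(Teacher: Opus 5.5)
Your case $y\in R$ is correct and matches the paper. The factor-$2$ comparison between a root uniform on $R$ and one uniform on $R\cup\{y\}$ is also fine. The gap is the case $y\notin R$: you never actually bound the traversals into $y$ when $y$ is forced to answer \textbf{false}.

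Your monotonicity claim is left unverified, and it is not evident in either direction. Forcing $y$ out lets a caller $x'$ keep scanning past $y$, possibly reaching later-ranked vertices that also recurse into $y$. It can also flip $x'$'s answer to \textbf{true}, which truncates or extends the scans of $x'$'s own callers.

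Your fallback is simply incorrect. The per-path bound $1/k!$ (the probability that a fixed path has decreasing ranks) is the Nguyen--Onak bound. Summing it over all paths ending in $\{x',y\}$ grows exponentially with the degree rather than giving $1/(|R|+1)$. For instance, if $G[R]$ is a clique and $y$ is adjacent to all of it, the $|R|-1$ paths $u\to x'\to y$ alone contribute $\frac{|R|-1}{|R|}\cdot\frac16=\Theta(1)$. The $1/r$ per-edge bound of \cref{lem:published}(i) comes from a much finer charging argument. That argument uses the answers of the scanned neighbors: a path is explored only if every earlier-scanned neighbor along it answered \textbf{false}. This is exactly what forcing $y$'s answer disturbs, so it cannot be invoked ``regardless of the answer at $y$''.

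The missing idea is to realize the forced answer inside an honest instance of the standard oracle under a \emph{uniform} permutation, so that \cref{lem:published}(i) applies as a black box. Your ``dummy earlier neighbor that is always in'' is the right instinct. However, a single dummy at a uniform position need not precede $y$, and conditioning on it preceding $y$ biases the induced order on $R\cup\{y\}$ toward $y$ being late.

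The paper instead adds $2(|R|+1)$ guard vertices adjacent only to $y$. It then conditions on the event that the \emph{first} vertex of the permutation is a guard and the root lies in $R$. This event has probability at least $1/9$. It keeps the induced order on $R\cup\{y\}$ uniform and the root uniform on $R$. Under it, the first guard is in the MIS, so every evaluation of $y$ returns \textbf{false} immediately, exactly like $\Residual(y)$. Since guards have no neighbors in $R$, the traversals of edges $(x',y)$ with $x'\in R$ correspond exactly to the $\Residual(y)$ calls. Applying \cref{lem:published}(i) to the augmented graph on $3|R|+3$ vertices, and losing a factor of at most $9$ for the conditioning, gives $O(1/|R|)$ per edge and hence $O(b_y/|R|)$.
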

\begin{proof}
A call $\Residual(y)$ is issued exactly when the scan of some occurrence of
$\Query_R(x)$ reaches the position of $y$ and finds $(x,y)\in E$; such an $x$
lies in $N(y)\cap R$, so what we must count is how often the recursion
follows one of the $b_y$ edges $(x,y)$ with $x\in N(y)\cap R$.

If $y\in R$, the $\Residual$ test is immediately followed by the recursive call
$\Query_R(y)$, so these are exactly the traversals of the $b_y$ 
edges of $G[R]$ entering $y$ made by the standard oracle on $G[R]$ under $\tau$.
\Cref{lem:published}(i) bounds the expected multiplicity of each edge by
$1/|R|$; summing over the $b_y$ edges proves the claim.

Let now $y\notin R$, so that the test returns \textbf{false} and the
recursion never enters $y$.  Then \cref{lem:published} does not apply to $G[R]$,
and we instead realize ``$y$ is always rejected'' inside a larger graph where
it does. This has some similarity to the proof of Lemma 3.8 in~\cite{MRTV26} but is simpler.

Augment $G[R]$ by $y$ and a set of $2(|R|+1)$ \emph{guard vertices}; keep the $b_y$
edges from $y$ to $R$, make every guard adjacent to $y$, and add no other
guard edges.  The augmented graph has $3|R|+3$ vertices.  Draw a uniform
permutation and an independent uniform root, and condition on the \emph{good event}
that the first vertex is a guard and the root lies in $R$. The good event has
probability $\frac{2(|R|+1)}{3|R|+3}\cdot\frac{|R|}{3|R|+3}\ge\frac19$.
Moreover, conditioned on the good event,
the root is uniform in $R$ and the induced order on $R\cup\{y\}$ is uniform.
The first guard is selected into the MIS; being adjacent to $y$ and preceding everything,
it makes every evaluation of $y$ return \textbf{false} at once, exactly as
$\Residual(y)$ does.  Since guards have no neighbors in $R$,
the standard oracle run on the augmented graph
(under the good event) makes the same edge traversals
as the actual algorithm, except that the traversals of edges $(x,y)$ in the augmented graph
correspond to the $\Residual(y)$ calls of the actual algorithm.
For each of
those $b_y$ edges,
\cref{lem:published}(i) (applied to the augmented graph) gives expected
multiplicity $O(1/|R|)$. Finally, conditioning on an event of probability at least
$1/9$ inflates this by at most $9$
(in other words, if the expectation under the good event -- which we want to bound -- were too large, the overall expectation
-- bounded by \cref{lem:published}(i) -- would have also been too large).
Summing over the $b_y$ edges proves the
claim.
\end{proof}

\begin{lemma}[raw adjacency matrix probes]
\label{lem:scan}
Fix a prefix with $R\ne\varnothing$.  Conditioned on it, draw the
suffix order uniformly and an independent uniform root $x\in R$.  Then the
expected total number of raw suffix positions scanned by $\Query_R(x)$,
over all its recursive calls and counted with multiplicity, is $O(n)$.
\end{lemma}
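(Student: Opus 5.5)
The plan is to compare the raw scan of $\Query_R$ with the standard adjacency-matrix oracle on $G[R]$ under the induced order $\tau$, and then pay for the non-residual suffix positions by using that residual vertices sit at random places in the suffix. Condition on the prefix, so that $S$, $R$ and the suffix set $U=V\setminus P$ are all fixed, with $|U|=n-L$. The recursion structure of $\Query_R$ is exactly that of the standard oracle on $G[R]$ under $\tau=\pi|_R$. This holds because $\Residual$ filters out every non-residual neighbor, and those filtered neighbors trigger no recursion. Hence the multiset of recursive calls $\Query_R(z)$ made from root $x$ coincides with the calls of the standard oracle on $G[R]$.

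For each call $\Query_R(z)$ that scans up to some point, the number of raw positions it scans is the length of the raw interval from position $L+1$ up to its stopping point. The standard oracle, in the same call, scans the residual positions in that interval. So I would couple the suffix order as follows. First draw $\tau$, a uniform order on $R$. Independently, draw a uniform random subset of $|R|$ positions among the $n-L$ suffix slots to hold $R$, in order $\tau$, and fill the remaining slots with $U\setminus R$ arbitrarily. The resulting suffix order is uniform. Importantly, the whole recursion, including which residual position each call stops at, depends only on $\tau$ and the root, and not on the slot set.

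Now I would bound the raw length of an interval ending at the $k$-th residual slot. In expectation over the random slot set, this length is $k\cdot\frac{n-L+1}{|R|+1}\le k\cdot\frac{n}{|R|}$, since it is the position of the $k$-th order statistic. If a call instead scans to $\rank_\pi(z)-1$ without stopping early, the interval ends just before the residual slot of $z$, which is again bounded by the same order-statistic expression. By linearity of expectation, and using the independence of the slot set from ($\tau$, root), the expected raw scan equals $\frac{n}{|R|}$ times the expected number of residual positions scanned by the standard oracle. By \cref{lem:published}(ii) applied to $H=G[R]$ with $r=|R|$, that expected number is $O(|R|)$. The product is therefore $O(n)$.

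The main obstacle is justifying the exchange of expectations. The number and identity of the calls, and their residual stopping indices $k$, are random, but they are all functions of ($\tau$, root) alone. I would therefore condition on ($\tau$, root), apply the order-statistic bound call by call, and then take expectation. Doing so needs care that each call's interval starts at raw position $L+1$, matching the standard oracle's scan from the first residual position. It also needs care that a stopping point at the $k$-th residual slot is measured in raw positions by $\E[\text{slot}_k]=k(n-L+1)/(|R|+1)$.
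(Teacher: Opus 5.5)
Your coupling (a uniform order $\tau$ on $R$, independent of which suffix slots $R$ occupies), your observation that the recursion depends only on $(\tau,x)$, and your order-statistic bound $\E[Z_j]=j(N+1)/(|R|+1)\le jn/|R|$ all match the paper's argument. Here $N=n-L$ and $Z_j$ is the raw suffix position of the $j$-th vertex of $R$.

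The gap is in the final accounting. Take a call $\Query_R(z)$ that runs to its own position without stopping and examines $k$ residual positions. Its raw interval ends just before the $(k+1)$-st residual slot. So its raw cost is $Z_{k+1}-1$, with expectation about $(k+1)n/|R|$, not $kn/|R|$. For example, a call on the $\tau$-first vertex of $R$ examines no residual positions, yet it scans $Z_1-1$ raw positions. Their expected number is $(N-|R|)/(|R|+1)$, which can be of order $n/|R|$.

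So your claim that the expected raw scan equals $\frac{n}{|R|}$ times the expected number of residual positions scanned is false. Conditioned on $(\tau,x)$, the order-statistic bound actually gives
$\frac{n}{|R|}\sum_h(k_h+1)=\frac{n}{|R|}\big(\sum_h k_h+A\big)$.
Here $k_h$ is the number of residual positions examined by the $h$-th call occurrence, and $A$ is the number of call occurrences. \cref{lem:published}(ii) bounds only $\E[\sum_h k_h]$, so you also need $\E[A]=O(|R|)$.

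This is easy to repair. The paper gets $\E[A]=O(|R|)$ from \cref{lem:published}(i): every non-root call is entered along one of at most $\binom{|R|}{2}$ edges of $G[R]$, and each edge is traversed at most $1/|R|$ times in expectation. Alternatively, every non-root call is spawned at a distinct examined residual position of its parent, so $A\le 1+\sum_h k_h$ deterministically. With this term added, your proof coincides with the paper's.
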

\begin{proof}
Conditioned on the prefix, the order $\tau$ of $R$ is uniform and independent
of the set of suffix positions that $R$ occupies: generate the suffix order
by first choosing the $|R|$ positions of $R$, then independent uniform orders
inside $R$ and inside its complement. 
The evaluation restricted to $R$ is the standard oracle of \cref{sec:oracle}
run on $G[R]$ under $\tau$; let its call occurrences examine
$k_1,\dots,k_A$ residual positions, and put $K=\sum_h(k_h+1)$.
Here $A$ is the number of calls, and we have $K = A + \sum_h k_h$.

\textbf{Bounding the number of \emph{residual} positions scanned.} 
\Cref{lem:published}(ii), applied to $G[R]$ with the
uniform order $\tau$ and the uniform root $x$, bounds $\E[\sum_hk_h]$ by
$O(|R|)$.  As for the number of calls,
every call occurrence but the initial one is entered along an
edge of $G[R]$, and by \cref{lem:published}(i) each of the at most
$\binom{|R|}{2}$ edges is traversed at most $1/|R|$ times in expectation, so
$\E[A]=O(|R|)$.  Adding the two bounds, $\E[K]=O(|R|)$.

\textbf{Bounding the number of \emph{raw} positions scanned.}  Write $N=n-L$ for the length of the suffix and
$Z_j$ for the suffix position of the $j$-th vertex of $R$.  Distributing the
$N-|R|$ discarded positions into the $|R|+1$ gaps around the surviving ones,
each gap has expected size $(N-|R|)/(|R|+1)$ by symmetry, so
$\E[Z_j]=j+j\frac{N-|R|}{|R|+1}=\frac{j(N+1)}{|R|+1}\le\frac{jn}{|R|}$.  A
call examining $k_h$ residual positions stops no later than just before its
$(k_h+1)$-st one --- it either stops at an examined residual neighbor or
reaches its own position, which it never examines, so $k_h+1\le|R|$ --- and
therefore scans at most $Z_{k_h+1}-1$ raw positions.  The positions being
independent of $\tau$ and of $x$, summing over the calls gives
\[
  \E\big[\text{raw positions scanned}\,\big|\,\tau,x\big]
  \le\frac n{|R|}\sum_h(k_h+1)=\frac n{|R|}K.
\]
Averaging over $\tau$ and $x$ and inserting $\E[K]=O(|R|)$ gives $O(n)$.
\end{proof}

The following lemma bounds the number of root-vertex draws
in a single \textsf{Trial}
-- the larger the MIS, the fewer draws we need.
It also uses the observation that $|R|$ cannot be too large compared to the MIS size,
since $G[R]$ has small maximum degree.

\begin{lemma}[number of roots drawn per $\mathsf{Trial}$]
\label{lem:num_roots}
Conditioned on any execution permutation $\pi$,
in one run of $\mathsf{Trial}$,
the expected number of:
\begin{itemize}
    \item
        all roots drawn is $Qn/|\RGMIS(\pi)|$,
    \item
        $R$-roots drawn is $Q|R|/|\RGMIS(\pi)|$,
        which, under the event of \cref{lem:reldeg}, is $O(QD)$.
\end{itemize}
\end{lemma}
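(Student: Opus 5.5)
Conditioned on $\pi$, the set $M=\RGMIS(\pi)$ is fixed, and \textsf{Member} answers exactly ``$x\in M$''. By \cref{lem:decomp}, \textsf{Member}$(x)$ returns \textbf{true} iff $x\in S$, or $x\in R$ and $x\in\RGMIS_{G[R]}(\tau)$. Hence every draw is an independent success with probability $p=|M|/n>0$; note $M\neq\varnothing$ for $n\ge1$. The loop stops at the $Q$-th success, so $\mathsf{draws}$ is negative binomial. By Wald's identity (equivalently, as a sum of $Q$ independent geometric variables with mean $1/p$), $\E[\mathsf{draws}]=Q/p=Qn/|M|$. This proves the first bullet.

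For the second bullet, I would write the number of $R$-roots as $\sum_{i=1}^{\mathsf{draws}}\mathbf 1[x_i\in R]$. The draws $x_i$ are i.i.d.\ uniform, and the event $\{\mathsf{draws}\ge i\}$ depends only on $x_1,\dots,x_{i-1}$. So Wald's identity gives expectation $\E[\mathsf{draws}]\cdot|R|/n=Q|R|/|M|$. Alternatively, one can split the draws into the $Q$ geometric phases: each phase draws $1/p$ roots in expectation, and a fraction $|R|/n$ of them land in $R$. One small point is that the indicator $\mathbf 1[x_i\in R]$ is correlated with whether $x_i$ is a success. Wald's identity still applies because the stopping decision for index $i$ uses only earlier draws. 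This step is the only place needing care.

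Finally, on the event of \cref{lem:reldeg}, $G[R]$ has maximum degree at most $D$. By \cref{lem:decomp}, $M\cap R=\RGMIS_{G[R]}(\tau)$ is a maximal independent set of $G[R]$, so every vertex of $R$ lies in $N[M\cap R]$ within $G[R]$. Therefore $|R|\le(D+1)|M\cap R|\le(D+1)|M|$, which is \eqref{eq:R-is-small}. Substituting gives $Q|R|/|M|\le Q(D+1)=O(QD)$, using $D\ge1$.
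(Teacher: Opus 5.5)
Your proof is correct and follows the paper's argument. The paper uses a first-step recursion $\E[X]=\frac1n+\frac{n-|\RGMIS(\pi)|}{n}\E[X]$ to show that each fixed vertex is drawn $1/|\RGMIS(\pi)|$ times per success in expectation, then sums over $x\in V$ or $x\in R$; this is the per-vertex form of your Wald computation (and avoids the correlation issue you flag), while the final bound $|R|\le(D+1)|\RGMIS_{G[R]}(\tau)|$ from maximality in a graph of maximum degree at most $D$ is the same as yours.
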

\begin{proof}
We show that
for any $x \in V$,
the expected number of times $x$ is drawn as root
is $Q/|\RGMIS(\pi)|$.
By linearity of expectation, it is enough to show that the expected number of draws of $x$
until the first success is $1/|\RGMIS(\pi)|$.
Denote this number by $X$, and the first vertex drawn by $f$.
We have the recursion
\[ \E[X] = \Prb(f=x) + \Prb(f \not \in \RGMIS(\pi)) \cdot \E[X] = \frac{1}{n} + \frac{n-|\RGMIS(\pi)|}{n} \E[X] , \]
which solves to $\E[X] = 1/|\RGMIS(\pi)|$.
The first part follows by summing over all $x \in V$.

For the second part, the first bound follows by summing over all $x \in R$.
For the second bound,
note that under the event of \cref{lem:reldeg},
$\RGMIS_{G[R]}(\tau)$ is maximal in
a graph of maximum degree at most $D$,
so its closed neighborhoods cover $R$.
Therefore
\[
  (D+1) |\RGMIS_{G[R]}(\tau)| \ge |R|
\]
and thus
\[
  \frac{Q|R|}{|\RGMIS(\pi)|} \le \frac{Q|R|}{|\RGMIS_{G[R]}(\tau)|} \le Q(D+1) .
\]
\end{proof}

Now we put all the pieces together.

\begin{proposition}[work of one uncapped $\mathsf{Trial}$]
\label{prop:work}
Conditioned on any prefix satisfying \cref{lem:reldeg}, the expected
runtime of one $\mathsf{Trial}$ (sampling until $Q$ successes) is
$O(L^2+QDn)$.
\end{proposition}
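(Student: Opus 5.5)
The plan is to split the runtime of one $\mathsf{Trial}$ into four parts: the prefix probes, the top-level work per drawn root, the raw suffix scans inside $\Query_R$, and the $\Residual$ tests issued inside $\Query_R$. Each part is bounded in expectation conditioned on the prefix, and the bounds are combined using \cref{lem:num_roots}. The prefix costs $O(L^2)$ deterministically. Building the array for \cref{lem:cert-cost} costs $O(L)$.

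\textbf{Top-level work per root.} Every drawn root $x$ pays $O(1)$ for the membership checks against $S$ and $P$. A suffix root also pays for one top-level $\Residual(x)$, which by \cref{lem:cert-cost} costs $O(|S|/(a_x+1))\le O(L)$. By the first part of \cref{lem:num_roots}, the expected number of draws of each fixed $x$ is $Q/|\RGMIS(\pi)|$. Summing $|S|/(a_x+1)$ over $x$ therefore gives at most $Qn|S|/|\RGMIS(\pi)|\le Qn$, using $S\subseteq\RGMIS(\pi)$ from \cref{lem:decomp}. This step needs a little care: the number of times $x$ is drawn depends only on $\pi$ and the root draws, not on the fresh $\Residual$ randomness, so the two expectations multiply by independence (Wald-style).

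\textbf{Raw suffix scans.} Only $R$-roots launch $\Query_R$. Conditioned on $\pi$, the expected number of $R$-roots is $O(QD)$ by \cref{lem:num_roots}. The subtle point is that \cref{lem:scan} averages over the suffix order, while \cref{lem:num_roots} conditions on $\pi$. I would therefore condition only on the prefix and take the expectation over the suffix order of $\sum_{x\in R}(\#\text{draws of }x)\cdot\mathrm{cost}(x)$. Because $S$ and $R$ are fixed by the prefix, $|\RGMIS(\pi)|\ge|S|+|R|/(D+1)$ deterministically. Hence each $x\in R$ is drawn at most $Q(D+1)/|R|$ times in expectation for every suffix order. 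Summing the per-root cost over $x\in R$ then gives $Q(D+1)\cdot\E_x[\mathrm{cost}(x)]$ with $x$ uniform in $R$, which \cref{lem:scan} bounds by $O(n)$. Each scanned position costs one probe, so this part is $O(QDn)$. The key step is replacing $|\RGMIS(\pi)|$ by this order-independent lower bound, so that the draw count decouples from the suffix order.

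\textbf{Residual tests inside the recursion.} This is the main obstacle. By the same decoupling, the expected cost is $Q(D+1)$ times the cost of one evaluation from a uniform root of $R$. By \cref{lem:calls}, that evaluation issues $O(b_y/|R|)$ tests at each suffix vertex $y$. By \cref{lem:cert-cost} each such test costs $O(|S|/(a_y+1))$, using independence of the fresh shuffles. Multiplying the three factors gives
\[
Q\frac{|R|}{|\RGMIS(\pi)|}\cdot\frac{b_y}{|R|}\cdot\frac{|S|}{a_y+1}\le Q\frac{b_y}{a_y+1}\le QD,
\]
where the first inequality uses $|S|\le|\RGMIS(\pi)|$. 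The middle factor can be kept as $|R|/|\RGMIS(\pi)|$ here because the lower bound above is also deterministic given the prefix. The final inequality is exactly \cref{lem:reldeg}. Summing over at most $n$ vertices $y$ gives $O(QDn)$. One subtlety to check is that \cref{lem:calls} requires the order on $R\cup\{y\}$ to be uniform; conditioning on the prefix provides this for every suffix vertex $y$.

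Adding the four parts gives $O(L^2+QDn)$.
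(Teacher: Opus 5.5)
Your proposal is correct and follows the paper's proof: you use the same split into prefix probes, top-level $\Residual$ tests, raw suffix scans, and $\Residual$ tests at scanned neighbors, and you combine \cref{lem:num_roots}, \cref{lem:scan}, \cref{lem:calls} and \cref{lem:cert-cost} in the same way, with your explicit decoupling via the prefix-determined bounds $|\RGMIS(\pi)|\ge|S|$ and $|\RGMIS(\pi)|\ge|R|/(D+1)$ being a more careful rendering of the paper's ``this bound does not depend on $\pi$, so bounding pointwise''. One slip to fix: in the last part the multiplier must come from $|\RGMIS(\pi)|\ge|S|$ (i.e.\ at most $Q|R|/|S|$ expected $R$-roots), not from the $Q(D+1)$ decoupling you first invoke, which would leave an uncontrolled factor $|S|/|R|$; your displayed chain already uses the right bound.
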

\begin{proof}
It should be clear that the running time of the entire algorithm
is dominated by:
\begin{itemize}
    \item the bulk prefix probes in $\mathsf{Prefix}$,
    \item the raw suffix scans in $\Query_R$,
    \item the $\Residual$ tests,
    which arise in two places: top-level $\Residual$ tests (made directly from $\Member$),
    and $\Residual$ tests at scanned neighbors (made from $\Query_R$).
\end{itemize}
We treat them one by one.

\paragraph{Bulk prefix probes.}
These cost $O(L^2)$.

\paragraph{Raw suffix scans in $\Query_R$.}
\cref{lem:scan} says that
for one random root $x \in R$,
the expected number of raw suffix positions scanned
by all the recursive calls of $\Query_R(x)$
is $O(n)$.
By \cref{lem:num_roots},
conditioned on the permutation $\pi$,
the expected number of random roots in $R$ per $\mathsf{Trial}$
is $O(QD)$;
note that this bound does not depend on $\pi$,
so bounding pointwise gives us a total of $O(QDn)$.

\paragraph{Top-level $\Residual$ tests.}
By \cref{lem:num_roots} there are
at most $Qn/|\RGMIS(\pi)| \le Qn/|S|$ many in expectation,
and each costs $O(|S|)$ by \cref{lem:cert-cost},
for a total of $O(Qn)$.

\paragraph{$\Residual$ tests at scanned neighbors.}
Fix any suffix vertex $y$.
\cref{lem:calls} says that
for one random root $x \in R$,
the expected number of $\Residual$ calls
(made by all the recursive calls of $\Query_R(x)$)
is $O(b_y / |R|)$.
By \cref{lem:num_roots},
conditioned on the permutation $\pi$,
the expected number of random roots in $R$ per $\mathsf{Trial}$
is $Q|R| / |\RGMIS(\pi)| \le Q|R|/|S|$;
note that this bound does not depend on the suffix order,
so bounding pointwise gives us a total of $O(Q b_y / |S|)$
$\Residual$ calls per $\mathsf{Trial}$.
\cref{lem:cert-cost} bounds the cost of a single $\Residual(y)$ test
by $O\big(|S|/(a_y+1)\big)$
(recall that we use fresh randomness there).
In the end, the cost becomes
\[
    O \left(\sum_y \frac{Q b_y}{|S|} \cdot \frac{|S|}{a_y+1}\right) \le 
    O \left(\sum_y QD\right) = O(QDn)
\]
where we used $b_y \le D (a_y+1)$ from \cref{lem:reldeg}.
\end{proof}

Taking
\[
  L=\lceil n^{2/3}\rceil,\qquad
  Q=\Theta(\eps^{-2}\log(2n)),\qquad
  D=O\!\big(n\log n/L\big),
\]
\cref{prop:work} gives the balanced bound
\begin{equation}
  \label{eq:balanced}
  \E[\text{work of a trial}\mid\text{good prefix}]
  =O\!\big(n^{4/3}+\eps^{-2}n^{4/3}\log^2(2n)\big)
  =\tO(\eps^{-2}n^{4/3}).
\end{equation}

\subsection{Accuracy and worst-case capped running time}
\label{sec:accuracy}

The remaining arguments are routine.
Conditioned on its execution permutation $\pi$, each sampled membership bit is
Bernoulli with mean $p=|\RGMIS(\pi)|/n$, and $\mathsf{draws}$ counts the draws to the
$Q$-th success.

\begin{lemma}[inverse-sampling concentration]
\label{lem:conc}
For every $\delta\in(0,\tfrac12)$,
\[
  \Prb\!\big(|Q/\mathsf{draws}-p|>\delta p\big)
  \le 2e^{-\Omega(\delta^2Q)}.
\]
\end{lemma}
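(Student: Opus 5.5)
We convert the statement about $\mathsf{draws}$ into statements about sums of independent Bernoulli trials, then apply Chernoff bounds. Conditioned on $\pi$, the sampled membership bits $X_1,X_2,\dots$ are i.i.d.\ Bernoulli($p$), and $\mathsf{draws}$ is the index of the $Q$-th success. For any integer $m$, the event $\{\mathsf{draws}\le m\}$ coincides with $\{\sum_{i\le m}X_i\ge Q\}$. This duality between a negative binomial variable and a binomial one drives the whole argument. We split the failure event $|Q/\mathsf{draws}-p|>\delta p$ into two tails:
\[
\mathsf{draws}<\frac{Q}{(1+\delta)p}
\qquad\text{and}\qquad
\mathsf{draws}>\frac{Q}{(1-\delta)p}.
\]
We bound each tail separately.

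\textbf{Upper tail of $Q/\mathsf{draws}$.} Set $m_1=\lfloor Q/((1+\delta)p)\rfloor$. The event $\mathsf{draws}\le m_1$ means $\sum_{i\le m_1}X_i\ge Q$. The mean of this sum is $m_1p\le Q/(1+\delta)$, so reaching $Q$ requires exceeding the mean by a factor of at least $1+\delta$. The multiplicative Chernoff bound gives probability at most
\[
\exp\!\big(-\delta^2 \mu/3\big),\qquad \mu=\frac{Q}{1+\delta}.
\]
If $m_1p$ is smaller than $Q/(1+\delta)$, we use monotonicity: the sum is stochastically dominated by one with mean exactly $Q/(1+\delta)$, or equivalently we apply the Chernoff form that uses an upper bound $\mu_H\ge\E$. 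Since $\delta<\tfrac12$, this is $e^{-\Omega(\delta^2Q)}$.

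\textbf{Lower tail of $Q/\mathsf{draws}$.} Set $m_2=\lceil Q/((1-\delta)p)\rceil-1$, or any integer $m$ with $\mathsf{draws}>Q/((1-\delta)p)$ implying $\mathsf{draws}>m$, where $m\ge Q/((1-\delta)p)-1$. The event $\mathsf{draws}>m$ is $\{\sum_{i\le m}X_i\le Q-1\}$. The mean here is $mp\ge Q/(1-\delta)-p$, and $Q-1$ lies below this mean by a relative gap of about $\delta$. We also need $Q-1\le (1-\delta')mp$ for some $\delta'=\Theta(\delta)$. This requires $Q/(1-\delta)-p\ge Q-1$, and that a constant fraction of the $\delta$-gap survives the rounding losses of $p\le 1$ and $1$. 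This holds once $Q\ge C/\delta$, which is implied by $Q=\Theta(\delta^{-2}\log n)$ with a suitable constant. For smaller $Q$ the claimed bound becomes trivial after adjusting the constant in $\Omega$. The lower-tail Chernoff bound then gives $\exp(-\delta'^2 mp/2)=e^{-\Omega(\delta^2Q)}$.

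The only delicate step is the rounding bookkeeping in the lower tail: the integer $m$, the $-1$ in $Q-1$, and the additive $p$ must not eat the $\delta$ relative gap. We handle this by absorbing these losses into the constant, either by assuming $\delta Q$ exceeds a large constant or by noting that the bound is vacuous otherwise. A union bound over the two tails gives the factor $2$.
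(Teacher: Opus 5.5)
Your proposal is correct and is essentially the paper's proof. Both use the duality $\{\mathsf{draws}\le t\}=\{\text{at least } Q \text{ successes in the first } t \text{ trials}\}$, split into the same two tails at (essentially) the same integer thresholds, and apply a multiplicative Chernoff bound to each.

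Your rounding worry in the lower tail is harmless but unnecessary. From $mp\ge Q/(1-\delta)-p\ge Q/(1-\delta)-1$ you get $(1-\delta)mp\ge Q-1+\delta>Q-1$ directly, so no condition like $Q\ge C/\delta$ is needed.
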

\begin{proof}
$\{\mathsf{draws}\le t\}$ is the event of at least $Q$ successes in the first $t$
Bernoulli trials, and $\{\mathsf{draws}>t\}$ that of fewer than $Q$.  Apply the upper-tail
Chernoff bound at $t=\lfloor Q/((1+\delta)p)\rfloor$ and the lower-tail bound
at $t=\lfloor Q/((1-\delta)p)\rfloor$; each probability is
$e^{-\Omega(\delta^2Q)}$.  If $p=1$ then $\mathsf{draws}=Q$.
\end{proof}

Set $\delta=\eps/3$ and $\widehat M=\min\{n,\ nQ/((1-\delta)\mathsf{draws})\}$.  On the
event of \cref{lem:conc},
\[
  |\RGMIS(\pi)|\le\widehat M\le\frac{1+\delta}{1-\delta}|\RGMIS(\pi)|
  \le(1+\eps)|\RGMIS(\pi)|,
\]
the last step holding for $\eps\in(0,1)$; clipping at $n$ preserves the
lower bound as $|\RGMIS(\pi)|\le n$.

\begin{algorithm}[H]
\caption{The estimator of \cref{thm:mis}.}
\label{alg:estimate}
\begin{algorithmic}[1]
\Procedure{Estimate}{$G,\eps$}
  \State $A\gets\Theta(\log(2n))$
  \For{$i=1,\dots,A$}
    \State run \Call{Trial}{}, aborting it (a \emph{capped}
           trial) as soon as its probe counter or its step counter exceeds a
           large constant times the bound of \cref{prop:work}
    \If{the trial was not aborted}
      \State \Return its output $\widehat M$
    \EndIf
  \EndFor
  \State \Return $n$
    \Comment{failure; probability $n^{-\Omega(1)}$}
\EndProcedure
\end{algorithmic}
\end{algorithm}

\begin{proof}[Proof of \cref{thm:mis}]
The case $n=1$ is immediate, so assume $n\ge2$.
Give each trial a runtime counter, and abort
it once it exceeds a sufficiently large constant multiple of the
corresponding conditional-expectation bound from \cref{prop:work} and
\eqref{eq:balanced}.  By Markov, a
trial with a good prefix completes before the cap with probability at
least $2/3$.

Run $A=\Theta(\log(2n))$ independent capped trials, with a sufficiently
large constant, and return the first that completes (outputting a fixed
permutation and $n$ if none does).  By \cref{lem:reldeg}, each trial
completes with constant probability, so the probability that none completes
is at most $n^{-20}$.  Choose the constant hidden in
$Q=\Theta(\eps^{-2}\log(2n))$ large enough that, by \cref{lem:conc} with
$\delta=\eps/3$ and a union bound over the $A$ trials, the probability that
any trial is inaccurate is also at most $n^{-20}$.  This union bound does not
require completion and accuracy to be independent.  Thus the total failure
probability is at most $2n^{-20}\le n^{-10}$.

Every trial starts from a uniform $\pi$; the first to finish need not be
uniform, but by \cref{lem:decomp} every possible winning permutation defines
a genuine maximal independent set, and the concentration analysis holds for
every fixed permutation.

The caps make the work deterministic: one trial uses
$\tO(\eps^{-2}n^{4/3})$ runtime,
and the $A=O(\log(2n))$ trials contribute another logarithm.
This proves \cref{thm:mis}.
\end{proof}

\subsection{Consequence for Steiner Forest}
\label{sec:mis-sf}

\begin{proof}[Proof of \cref{cor:sf-via-mis}]
The reduction of~\cite{MRTV26} estimates the metric Steiner Forest cost by
estimating, at $O(\log k)$ distance scales, the size of a maximal
independent set of an auxiliary graph given by an adjacency-matrix oracle
(one distance query per adjacency query), and its analysis uses only that
each such set is independent and maximal.  Let $m=O(\log k)$ be the number
of calls.  Apply the construction in the proof of \cref{thm:mis} with fixed
accuracy (say $\eps=\tfrac1{100}$) and constants chosen so each call fails
with probability at most $n^{-20}$.  A union bound over the $m$ calls gives
the claimed probability with the same $\tO(n^{4/3})$ running time.
\end{proof}

\section{Steiner Forest via Capped Spanning Trees}
\label{sec:sf}

Throughout this section $(V,d)$ is the input metric with $|V|=n$, and the algorithm can query entries $d(u,v)$.  The given $k$
demand pairs are denoted by $(s_1,t_1),\dots,(s_k,t_k)\in V\times V$, and we use $\OPT$ to denote the cost of the optimal Steiner Forest.
We may assume $k\ge 1$ and that
$X:=\max_j d(s_j,t_j)>0$, as otherwise $\OPT=0$ and the algorithm outputs
$0$.

 As in~\cite{MRTV26} we assume $k=O(n)$,\footnote{This is without loss of
generality in the sense that every instance is characterized by at most
$n-1$ terminal pairs, but reading the demand list and computing such a
succinct representation costs $O(k)$. Note that some dependence on $k$ is
necessary: consider a single point $v$ placed very far from all
others and appearing in exactly one demand pair.  An algorithm that never
inspects that pair is off by an arbitrarily large factor.} and thus reading
the demand list costs $O(k)=O(n)$.  We assume that distinct points
of $V$ are at positive distance.  Throughout,
$\tO(\cdot)$ hides factors polylogarithmic in $n$, and ``with high
probability'' means with probability at least $1-n^{-10}$.
(The constant $10$ could be made arbitrarily large.)

\subsection{The metric-MST black box}
Our Steiner Forest algorithm uses the following estimator as a black box.
For a metric $(Y,\mu)$ we write $\MST(Y,\mu)$ for its
minimum-spanning-tree weight.

\begin{theorem}[Czumaj and Sohler~\cite{CzumajSohler09}]
\label{thm:cs}
For every $\eps\in(0,1)$ there is an algorithm that, given query access to
the distance matrix of an $N$-point metric $(Y,\mu)$, computes a value
$\mathcal M$ with
\[
  (1-\eps)\,\MST(Y,\mu)\;\le\;\mathcal M\;\le\;(1+\eps)\,\MST(Y,\mu)
\]
with probability at least $3/4$, in $\tO(N\cdot\eps^{-O(1)})$ time and
queries.
\end{theorem}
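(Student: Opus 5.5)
This statement is the Czumaj--Sohler metric-MST estimator, which the paper uses as a black box. The honest proof is therefore a citation to~\cite{CzumajSohler09}; no step of this paper depends on it beyond the interface stated. Below I sketch how I would reconstruct the argument, in the order I would carry it out.

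\emph{Step 1: fix the range of scales.} Choose an arbitrary point $v$ and read its $N-1$ distances, setting $W=\max_u \mu(v,u)$. This gives $W\le \MST(Y,\mu)\le (N-1)W$: the first inequality holds because the tree path from $v$ to its farthest point has length at least $W$, and the second because the star at $v$ is a spanning tree. Round every distance up to the next power of $(1+\eps)$. Replace distances below $\eps W/N$ by $\eps W/N$, which adds at most $\eps W\le\eps\,\MST$. Only $\ell=O(\eps^{-1}\log(N/\eps))$ scales $t_i=(1+\eps)^i\eps W/N$ then remain, and the rounding changes $\MST$ by at most a $(1\pm O(\eps))$ factor.

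\emph{Step 2: the Chazelle--Rubinfeld--Trevisan identity.} Let $c_i$ be the number of connected components of the threshold graph $G_i=(Y,\{uv:\mu(u,v)\le t_i\})$. For the rounded metric,
\[
\MST = t_0(N-1)+\sum_{i\ge 0}(t_{i+1}-t_i)(c_i-1).
\]
It therefore suffices to estimate each $c_i$ to within additive error about $\eps\,\MST/(\ell\,(t_{i+1}-t_i))$, and then sum the estimates.

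\emph{Step 3: estimating $c_i$.} Write $c_i=\sum_u 1/|C(u)|$, where $C(u)$ is the component containing $u$. Sample $\tO(\cdot)$ random points $u$ and explore $C(u)$ with a BFS truncated at size about $1/\eps'$, charging $0$ whenever the truncation triggers. The required additive accuracy is compatible with $\MST\ge (t_{i+1}-t_i)(c_i-1)$ summed over scales, so the per-scale sample counts come out polylogarithmic in $N$ and polynomial in $1/\eps$.

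\emph{Main obstacle: the cost of the BFS in the distance-matrix model.} Finding the neighbours of a vertex in $G_i$ requires scanning a row of length $N$. A naive exploration therefore costs $N/\eps'$ per sample, and dense graphs give no degree bound to exploit. Here I would use the metric structure, following Czumaj--Sohler. I would replace $G_i$ by a net-based clustering at scale $\eps t_i$: take a greedy maximal $\eps t_i$-separated set, built implicitly and locally. Points within $\eps t_i$ of one another lie in the same component of $G_{i'}$ for slightly larger $i'$, so this changes component counts only in a way absorbed by the scale shift. Counting components of the resulting cluster graph instead of $G_i$ makes each sampled exploration touch $O(1/\eps')$ clusters, and locating a sampled point's cluster costs $\tO(N)$ probes only for a polylogarithmic number of samples per scale. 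This gives the claimed $\tO(N\eps^{-O(1)})$ total.

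I expect exactly this step to be the hard part: bounding the probes needed to resolve cluster membership and cluster adjacency so that the total stays near-linear rather than $N$ times the number of samples per scale. It is also where the triangle inequality is indispensable, consistent with the $\Omega(n^2)$ lower bound for non-metric instances mentioned in the introduction. Finally, a Chebyshev bound on each estimate together with a union bound over the $\ell$ scales gives success probability $3/4$.
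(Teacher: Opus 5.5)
Your proposal agrees with the paper, which also gives no proof of \cref{thm:cs}: it imports the result verbatim from~\cite{CzumajSohler09} and uses it only through the stated interface, in the proof of \cref{thm:sf}. Do not treat your reconstruction as a proof, though, because it already fails at Step~3: truncated BFS from polylogarithmically many uniformly random roots estimates $c_i$ only to additive error about $\eps' N$, whereas Step~2 demands error below $\MST/(\ell t_i)$, which drops below $1$ once $t_i\ge\MST/\ell$, and for $N-1$ tightly clustered points plus one far point (so that $\MST$ is essentially the distance to that point) uniform roots essentially never sample the far point and the estimate misses $\Theta(\MST)$; the genuine difficulty is therefore accuracy rather than BFS cost (only $\tO(N/\eps')$ per root, which is harmless), and your net-based clustering does not cure it as long as roots are drawn uniformly from $Y$.
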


The success probability $3/4$ can be amplified in the standard way: running
$\Theta(\log(1/\eta))$ independent copies and taking the median yields a
$(1\pm\eps)$-estimate with probability $1-\eta$, at a
$\Theta(\log(1/\eta))$ factor in the running time.  We will invoke this
with $\eps$ a fixed constant and $\eta = n^{-\Theta(1)}$.

\subsection{The estimator}
\label{sec:sf-estimator}

Let us start by setting the following values
\[
  \delta=\frac{X}{2k},\qquad
  L=\big\lfloor\log_2(2k)\big\rfloor,\qquad
  \tau_i=2^i\delta\ \ (0\le i\le L),\qquad
  \tau_{L+1}=2\tau_L.
\]
Call a demand \emph{short} if its length is at most $\delta$, and
\emph{retained} otherwise.  There are at most $k$ short demands, so buying
them directly costs at most $k\delta=X/2$; we set them aside and reconnect
them at the end.

For $0\le i\le L$, let
\[
  A_i=\big\{\,\text{endpoints of retained demands $(s_j,t_j)$ of length } d(s_j,t_j)\ge\tau_i\,\big\}\subseteq V .
\]
Since $\tau_0\le\tau_1\le\dots\le\tau_L$, we have
$A_0\supseteq A_1\supseteq\dots\supseteq A_L$.  Define the
\emph{capped metric} at scale $\tau_i$ by
\[
  \rho_i(u,v)=\min\{d(u,v),\tau_i\},
\]
let $W_i=\MST(A_i,\rho_i)$ (with $W_i=0$ if $|A_i|\le 1$), and set
\[
  B=\max_{0\le i\le L} W_i .
\]
The estimator will output a fixed scaling of an estimate of $B$.  The next
lemma is the two-line check promised in the overview.

\begin{lemma}[capping preserves the metric]
\label{lem:cap-metric}
For every $t>0$, the function $\rho(u,v)=\min\{d(u,v),t\}$ is a metric on
$V$.
\end{lemma}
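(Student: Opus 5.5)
We check the three metric axioms for $\rho(u,v)=\min\{d(u,v),t\}$ directly, using only that $d$ is a metric and $t>0$.

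Nonnegativity and symmetry are inherited at once from $d$. For identity of indiscernibles: $\rho(u,u)=\min\{0,t\}=0$. If $u\ne v$, then $d(u,v)>0$ (we assumed distinct points are at positive distance), and $t>0$, so $\rho(u,v)>0$.

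The only substantive axiom is the triangle inequality $\rho(u,w)\le\rho(u,v)+\rho(v,w)$. We split into two cases according to whether either term on the right is capped.
\begin{enumerate}[label=\textup{(\alph*)},leftmargin=2.4em]
  \item \emph{Some term on the right is capped}, i.e.\ $\rho(u,v)=t$ or $\rho(v,w)=t$. Then the right-hand side is at least $t$, since the other term is nonnegative. Also $\rho(u,w)\le t$ by definition, so the inequality holds.
  \item \emph{Neither term is capped}, i.e.\ $\rho(u,v)=d(u,v)$ and $\rho(v,w)=d(v,w)$. Then
  \[
    \rho(u,w)\le d(u,w)\le d(u,v)+d(v,w)=\rho(u,v)+\rho(v,w),
  \]
  by the triangle inequality for $d$.
\end{enumerate}
These two cases are exhaustive, since each of $\rho(u,v)$ and $\rho(v,w)$ equals either $t$ or the uncapped distance. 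This completes the proof.
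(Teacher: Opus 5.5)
Your proposal is correct and follows essentially the same route as the paper: the other axioms are inherited from $d$ using $t>0$, and the triangle inequality is split into the case where some right-hand term reaches the cap $t$ and the case where neither does. The only difference is cosmetic: you phrase the case split via $\rho(\cdot,\cdot)=t$ rather than $d(\cdot,\cdot)\ge t$, which is equivalent.
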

\begin{proof}
Non-negativity, symmetry, and $\rho(u,v)=0\Leftrightarrow u=v$ are
inherited from $d$ (using $t>0$).  For the triangle inequality, fix
$u,v,w$.  If $d(u,w)\ge t$ or $d(w,v)\ge t$, then
$\rho(u,w)+\rho(w,v)\ge t\ge\rho(u,v)$.  Otherwise both are below $t$, and
$\rho(u,v)\le d(u,v)\le d(u,w)+d(w,v)=\rho(u,w)+\rho(w,v)$.
\end{proof}

We will use one elementary fact repeatedly: a subset of the vertices of a
tree spans a tree of at most twice the cost.

\begin{lemma}[shortcutting]
\label{lem:shortcut}
Let $(Y,\mu)$ be a metric, let $T$ be a tree whose vertices lie in $Y$ and
whose edges are weighted by $\mu$, and let $U$ be a subset of the vertices
of $T$.  Then $U$ has a spanning tree, in the metric $\mu$, of weight at
most $2\,\mu(T)$, where $\mu(T)$ is the total weight of $T$.  In
particular, $\MST(U,\mu)\le 2\,\mu(T)$.
\end{lemma}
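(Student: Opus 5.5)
The plan is the classical doubling-and-shortcutting argument, as used for the $2$-approximation of Steiner tree by the MST. Let $T$ have vertex set $V(T)\subseteq Y$ and total weight $\mu(T)=\sum_{e\in T}\mu(e)$. If $|U|\le 1$ the claim is trivial, since the empty tree has weight $0$. So I assume $|U|\ge 2$.

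\textbf{Step 1: Euler tour.} Double every edge of $T$. This gives a connected multigraph in which every vertex has even degree, so it has an Eulerian circuit $C$. Equivalently, $C$ is the closed walk produced by a depth-first traversal of $T$. Every vertex of $T$ is visited by $C$, and the total $\mu$-weight of $C$ is exactly $2\mu(T)$.

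\textbf{Step 2: Shortcut to $U$.} List the vertices of $U$ in the order of their first appearance along $C$, say $u_1,\dots,u_m$. Consider the path $u_1u_2\cdots u_m$, with each consecutive pair joined by a direct edge of weight $\mu(u_\ell,u_{\ell+1})$. For each $\ell$, the portion of $C$ between the first visit to $u_\ell$ and the first visit to $u_{\ell+1}$ is a walk from $u_\ell$ to $u_{\ell+1}$. By repeated use of the triangle inequality for $\mu$, its weight is at least $\mu(u_\ell,u_{\ell+1})$. These portions are edge-disjoint segments of $C$, so summing over $\ell$ gives
\[
  \sum_{\ell=1}^{m-1}\mu(u_\ell,u_{\ell+1})\le \mu(C)=2\mu(T).
\]

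\textbf{Step 3: Conclude.} The path $u_1\cdots u_m$ is a spanning tree of $U$ in the metric $\mu$, and by Step 2 its weight is at most $2\mu(T)$. Hence $\MST(U,\mu)\le 2\mu(T)$.

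There is no real obstacle. The only point needing care is the triangle inequality in Step 2. The walk between consecutive first visits may pass through vertices of $T$ outside $U$, but these still lie in $Y$, and $\mu$ is a metric on all of $Y$. This is exactly why the statement requires $V(T)\subseteq Y$ and the edges to be weighted by $\mu$ itself.
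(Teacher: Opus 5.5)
Your proof is correct and is essentially the paper's argument: double the edges of $T$, take an Euler tour of weight $2\mu(T)$, and shortcut to the first occurrences of the vertices of $U$ via the triangle inequality. The only cosmetic difference is that you build the open path $u_1\cdots u_m$ directly, whereas the paper shortcuts to a closed walk on $U$ and then deletes one edge.
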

\begin{proof}
If $|U|\le 1$, the claim is immediate, so assume $|U|\ge 2$.
Double every edge of $T$ to obtain an Eulerian multigraph, and take an
Euler tour; it is a closed walk visiting every vertex of $T$, of total
weight $2\,\mu(T)$.  Delete from the tour all but the first occurrence of
each vertex of $U$; by the triangle inequality for $\mu$, shortcutting can
only decrease the weight, so we obtain a closed walk on $U$ of weight at
most $2\,\mu(T)$.  Removing one edge leaves a spanning path, hence a
spanning tree, of $U$ of weight at most $2\,\mu(T)$.
\end{proof}

\subsection{The upper bound: every capped MST is small}
\label{sec:sf-upper}

\begin{lemma}
\label{lem:sf-upper}
For every $0\le i\le L$ we have $W_i\le 3\,\OPT$, and hence $B\le 3\,\OPT$.
\end{lemma}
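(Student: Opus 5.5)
Fix a scale $i$ and let $F$ be an optimal Steiner Forest, of cost $\OPT$. The plan is to build from $F$ a spanning tree on $A_i$ in the capped metric $\rho_i$ of weight at most $3\,\OPT$. The approach: shortcut within each component of $F$, then join the components by capped edges, and pay for the joining edges by charging them to the components.

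\textbf{Step 1: intra-component trees.} Let $C_1,\dots,C_m$ be the connected components (trees) of $F$ that contain at least one vertex of $A_i$. Every vertex of $A_i$ is an endpoint of some demand, so it lies in some component of $F$. Apply \cref{lem:shortcut} to each $C_r$ with $U=C_r\cap A_i$ and metric $\rho_i\le d$. This gives a spanning tree $T_r$ of $C_r\cap A_i$ with
\[
  \rho_i(T_r)\le 2\,\rho_i(C_r)\le 2\,d(C_r).
\]
Here $\rho_i$ is a metric by \cref{lem:cap-metric}, so \cref{lem:shortcut} applies. Summing over $r$, these trees have total weight at most $2\,\OPT$.

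\textbf{Step 2: connecting the components.} Link $T_1,\dots,T_m$ into a single tree by adding $m-1$ edges between arbitrary representatives. Each added edge costs at most $\tau_i$ in $\rho_i$, so together they cost at most $(m-1)\tau_i$.

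To charge this, note that each $C_r$ contains a vertex of $A_i$, which belongs to some retained demand $(s_j,t_j)$ with $d(s_j,t_j)\ge\tau_i$. Since $F$ connects $s_j$ to $t_j$, both endpoints lie in $C_r$. Therefore the path in $C_r$ between them has $d$-length at least $\tau_i$, and so $d(C_r)\ge\tau_i$. Summing over $r$ gives
\[
  m\,\tau_i\le\sum_r d(C_r)\le\OPT.
\]

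\textbf{Conclusion.} The resulting spanning tree of $A_i$ has total $\rho_i$-weight at most $2\,\OPT+\OPT=3\,\OPT$, so $W_i\le 3\,\OPT$. The degenerate case $|A_i|\le1$ is trivial since then $W_i=0$. Taking the maximum over $i$ gives $B\le 3\,\OPT$.

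The only step needing care is the charging in Step 2. It relies on each component of $F$ being disjoint from the others, so that the $m$ lower bounds $d(C_r)\ge\tau_i$ sum to at most $\OPT$, and on $A_i$ being defined by the threshold $\tau_i$, which is what makes every relevant component long.
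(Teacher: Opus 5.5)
Your proof is correct and follows essentially the same approach as the paper: shortcut within each component of the optimal forest, join the resulting trees by $m-1$ capped edges of weight at most $\tau_i$, and pay for those edges because every relevant component has cost at least $\tau_i$. The only cosmetic difference is that you shortcut directly in $\rho_i$ (using \cref{lem:cap-metric}), whereas the paper shortcuts in $d$ and then uses $\rho_i\le d$.
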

\begin{proof}
If $|A_i|\le 1$, then $W_i=0$ and the claim is immediate, so assume
$|A_i|\ge 2$.
Fix an optimal forest $F^\star$ of cost $\OPT$.  Let $F_1,\dots,F_p$ be the
connected components of $F^\star$ that contain at least one point of $A_i$,
with costs $C_1,\dots,C_p$. Since these are distinct components of a
forest, $\sum_{r} C_r\le\OPT$.

Each $F_r$ contains a point $a\in A_i$, which is an endpoint of a retained
demand of length at least $\tau_i$. We know that its partner is connected to $a$ in
$F^\star$, and hence it lies in the same component $F_r$.  Thus $F_r$ connects two
points at distance at least $\tau_i$. So $C_r\ge\tau_i$, and therefore
\begin{equation}
  \label{eq:p-tau}
  p\,\tau_i\;\le\;\sum_{r=1}^p C_r\;\le\;\OPT .
\end{equation}

Now build a spanning tree of $A_i$ in the metric $\rho_i$.  The sets
$A_i\cap V(F_r)$ partition $A_i$.  Applying \cref{lem:shortcut} to the
metric $d$, the tree $F_r$, and the subset $A_i\cap V(F_r)$ gives a
spanning tree of $A_i\cap V(F_r)$ of $d$-weight at most $2C_r$. Since
$\rho_i\le d$, its $\rho_i$-weight is also at most $2C_r$.  Joining these
$p$ trees by $p-1$ edges, each of $\rho_i$-weight at most $\tau_i$, yields a
spanning tree of $A_i$ of $\rho_i$-weight at most
\[
  \sum_{r=1}^p 2C_r+(p-1)\tau_i\;\le\;2\,\OPT+p\,\tau_i\;\le\;3\,\OPT,
\]
using $\sum_r C_r\le\OPT$ and \eqref{eq:p-tau}.  Hence
$W_i=\MST(A_i,\rho_i)\le 3\,\OPT$.
\end{proof}

\subsection{The lower bound: capped MSTs build a feasible forest}
\label{sec:sf-lower}

For each $i$, fix a maximal subset $U_i\subseteq A_i$ whose points are
pairwise at distance at least $2\tau_i$ (in the original metric $d$), and
put $M_i=|U_i|$.  By maximality, every $a\in A_i$ has a \emph{center}
$c_i(a)\in U_i$ with $d(a,c_i(a))<2\tau_i$.  The next lemma is the bridge
between the packing numbers $M_i$ and the capped MST costs $W_i$.

\begin{lemma}
\label{lem:packing-vs-mst}
For every $0\le i\le L$ we have $M_i\,\tau_i\le 4\,W_i$.
\end{lemma}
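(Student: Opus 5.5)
Assume $M_i\ge 2$; if $M_i\le 1$ the claim is immediate. The case $M_i=1$ needs care, since then $M_i\tau_i=\tau_i$ and we must show $W_i\ge\tau_i/4$. Any $a\in A_i$ is an endpoint of a retained demand of length at least $\tau_i$, so $A_i$ contains two points at $d$-distance at least $\tau_i$. The $\rho_i$-distance between them is then exactly $\tau_i$. The MST path joining them has $\rho_i$-weight at least $\tau_i$ by the triangle inequality for $\rho_i$ (\cref{lem:cap-metric}), so $W_i\ge\tau_i\ge M_i\tau_i/4$. The same argument gives $W_i\ge\tau_i$ whenever $|A_i|\ge 2$, and we may use it as a fallback.

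For $M_i\ge 2$ I would use a packing argument on balls around the points of $U_i$. For $u\in U_i$, consider the open $d$-ball of radius $\tau_i/2$ (in fact I will use radius $\tau_i$). Points of $U_i$ are pairwise at $d$-distance at least $2\tau_i$, so the open balls $B(u,\tau_i)$ for $u\in U_i$ are pairwise disjoint. Take a minimum spanning tree $T$ of $A_i$ in $\rho_i$. Since $M_i\ge2$, every $u\in U_i$ is joined in $T$ by a path to some other $u'\in U_i$. Walking from $u$ along this path, there is a first edge $(x,y)$ with $x\in B(u,\tau_i/2)$ and $y\notin B(u,\tau_i/2)$.

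Charge that edge to $u$ and lower-bound its $\rho_i$-weight. If $d(x,y)\ge\tau_i$, its weight is $\tau_i$. Otherwise its weight is $d(x,y)$, and the walk from $u$ to $y$ leaves the ball, so the portion of the path from $u$ until it first exits $B(u,\tau_i/2)$ has total $\rho_i$-weight at least $\tau_i/2$. This holds because each edge on that portion is either capped, and then has weight $\tau_i\ge\tau_i/2$, or equals its $d$-length, and the $d$-lengths sum to at least $d(u,y)\ge\tau_i/2$.

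So for each $u$ define the \emph{exit segment}: the edges of $T$ on the path from $u$ up to and including the first edge leaving $B(u,\tau_i/2)$. Each exit segment has $\rho_i$-weight at least $\tau_i/2$. All of its edges except the last have both endpoints in $B(u,\tau_i/2)$, and the last has one endpoint there. The balls $B(u,\tau_i/2)$ are pairwise disjoint, and in fact at $d$-distance at least $\tau_i$ from one another. Hence an edge with both endpoints inside one ball belongs to at most that one segment. An edge with one endpoint in each of two different balls has $d$-length at least $\tau_i$, so it is capped with weight $\tau_i$, and it can belong to at most two segments (one per endpoint ball). Every edge therefore lies in at most two exit segments. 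Summing, $2W_i\ge M_i\cdot\tau_i/2$, i.e., $M_i\tau_i\le 4W_i$.

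The main obstacle is this overlap bookkeeping. I need to verify that an edge with exactly one endpoint inside one ball is counted only for that ball's segment. Its last-edge role is tied to the ball containing $x$, so it cannot be the last edge of two segments unless both of its endpoints lie in distinct balls, which is the capped case handled above.
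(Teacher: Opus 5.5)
Your proof is correct, and it takes a genuinely different route from the paper's.

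\textbf{The paper's route.} For $M_i\ge 2$, the paper notes that all pairwise $\rho_i$-distances within $U_i$ equal $\tau_i$. Hence every spanning tree of $U_i$ has $\rho_i$-weight exactly $(M_i-1)\tau_i$. It then applies the Euler-tour shortcutting lemma (\cref{lem:shortcut}) to a minimum spanning tree of $(A_i,\rho_i)$, which gives $(M_i-1)\tau_i\le 2W_i$. The inequality $M_i\le 2(M_i-1)$ finishes the proof.

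\textbf{Your route.} You charge directly against the minimum spanning tree. You place pairwise disjoint balls $B(u,\tau_i/2)$ around the points of $U_i$ and find one exit segment of $\rho_i$-weight at least $\tau_i/2$ per center. Each tree edge lies in at most two segments.

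\textbf{Comparison.} The paper's argument is shorter because it reuses \cref{lem:shortcut}, which it needs anyway for \cref{lem:sf-upper}. Yours is self-contained and bounds $M_i\tau_i$ directly, without the $M_i\le 2(M_i-1)$ step.

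\textbf{Simplifications.} The segment bound follows in one line from the triangle inequality for $\rho_i$ (\cref{lem:cap-metric}): the segment from $u$ to the exit vertex $y$ weighs at least $\rho_i(u,y)=\min\{d(u,y),\tau_i\}\ge\tau_i/2$. Your remark that cross-ball edges are capped is also unnecessary. An edge can lie only in the segments of balls containing one of its endpoints, and there are at most two such balls; this uses nothing but disjointness.

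Consequently, the radius $\tau_i$ you mention in passing also works. The open balls $B(u,\tau_i)$ are still pairwise disjoint, each exit segment then weighs at least $\tau_i$, and the same count gives the stronger bound $M_i\tau_i\le 2W_i$.

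The opening sentence calling the case $M_i\le 1$ immediate, and the stray radius remark, should be tidied up.
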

\begin{proof}
If $A_i=\emptyset$ then $M_i=0$ and the claim is trivial, so assume
$A_i\ne\emptyset$.  Since retained demands have distinct endpoints, $A_i$
then contains both endpoints of a demand of length at least $\tau_i$. Since their
$\rho_i$-distance is exactly $\tau_i$, any spanning tree of $A_i$ has
$\rho_i$-weight at least $\tau_i$, giving $W_i\ge\tau_i$.  This already
settles the case $M_i=1$: $M_i\tau_i=\tau_i\le W_i\le 4W_i$.

Suppose $M_i\ge 2$.  Distinct points of $U_i$ are at $d$-distance at least
$2\tau_i>\tau_i$, and hence at $\rho_i$-distance exactly $\tau_i$. Therefore
every spanning tree of $U_i$ has $\rho_i$-weight exactly $(M_i-1)\tau_i$.
On the other hand, $U_i\subseteq A_i$, and so applying \cref{lem:shortcut} to
$\rho_i$, we have a minimum spanning tree of $A_i$ (of weight $W_i$) and the subset
$U_i$ gives a spanning tree of $U_i$ of $\rho_i$-weight at most $2W_i$.
Thus $(M_i-1)\tau_i\le 2W_i$, and since $M_i\le 2(M_i-1)$ for $M_i\ge 2$,
\[
  M_i\tau_i\le 2(M_i-1)\tau_i\le 4W_i. \qedhere
\]
\end{proof}

\begin{lemma}
\label{lem:feasible-forest}
There is a Steiner forest connecting all retained demands with cost at most
$2X+6\sum_{i=0}^L M_i\tau_i$.  Consequently
$\OPT\le \tfrac{5}{2}X+6\sum_{i=0}^L M_i\tau_i$.
\end{lemma}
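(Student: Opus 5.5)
We build the forest greedily over the scales, from the largest down to the smallest. The guiding fact is that the demands at scale $i$ are the retained demands with $\tau_i\le d(s_j,t_j)<\tau_{i+1}$, and every retained demand falls in one such scale. Indeed, $\delta<d(s_j,t_j)\le X$ and $\tau_L\ge X/2$, because $2^L>k$. The only exception is that a demand of length exactly $X\ge\tau_{L+1}$ cannot occur, and for lengths in $(\delta,\tau_0]$ we have $\tau_0=\delta$. So lengths in $(\delta,2\delta)$ go to scale $0$, and the top scale $L$ is widened to take lengths up to $X\le 2\tau_L$.

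At scale $i$, every endpoint $a$ of a level-$i$ demand lies in $A_i$ and has a center $c_i(a)\in U_i$ with $d(a,c_i(a))<2\tau_i$. We connect each center $u\in U_i$ by a star to its assigned level-$i$ endpoints. This alone is too expensive, since many endpoints may share one center. Instead, we connect the demand pair through the centers: for a level-$i$ demand $(s,t)$, the centers $c_i(s)$ and $c_i(t)$ satisfy
\[
d(c_i(s),c_i(t))<2\tau_i+\tau_{i+1}+2\tau_i=6\tau_i .
\]

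We then build an auxiliary graph on $U_i$ with an edge $c_i(s)c_i(t)$ for each level-$i$ demand. We keep a spanning forest of it, which has at most $M_i-1$ edges, each realized by a path of cost $<6\tau_i$. This gives cost at most $6M_i\tau_i$ per scale for linking the centers.

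The remaining cost is attaching endpoints to centers, and this is the main obstacle: the attachments cost $2\tau_i$ per endpoint and are not bounded by $M_i$. I would fix this with the standard charging used in~\cite{MRTV26}. Process the scales from small to large and contract the components already built. The set $A_i$ contains the endpoints of all demands of length $\ge\tau_i$, including the larger ones. An endpoint $a$ of a level-$i$ demand is either a center itself or lies within $2\tau_i$ of its center. Choose the centers at scale $i$ so that an endpoint is linked to its center at that very scale, where it belongs to $A_i$.

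Concretely, the plan is to attach every point of $A_i$ to its center at each scale. To keep this cheap, I would restrict the star edges to those not already induced by lower-scale connectivity, and bound the new attachments by $M_{i-1}$-type terms using the packing. The additive $2X$ slack in the statement suggests the top-scale or endpoint attachments are charged to $X$. I expect this accounting to be the delicate step, and I would try charging each endpoint attachment at scale $i$ to a center of $U_{i-1}$, using that $U_i$-points are $2\tau_i$-separated.

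Finally, for the consequence, we add the short demands directly at cost $\le k\delta=X/2$. This gives a feasible forest of cost at most $\tfrac52X+6\sum_i M_i\tau_i$, so $\OPT$ is bounded by this quantity.
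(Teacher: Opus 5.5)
There is a genuine gap. Your proposal stops exactly at the step that carries the proof. You correctly note that attaching endpoints to their centers costs $2\tau_i$ per endpoint and is not bounded by $M_i$. But the remedies you list (``contract the components already built'', ``restrict the star edges'', ``charge each attachment to a center of $U_{i-1}$'') are never turned into an argument, so nothing bounds this cost.

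The missing idea is an inductive invariant, built from small scales to large (not largest to smallest, as your first sentence says): \emph{after level $i$, every level-$i$ cluster $c_i^{-1}(u)$, $u\in U_i$, is internally connected.}
\begin{itemize}
\item The base case attaches every $a\in A_0$ to $c_0(a)$. This is where the additive $2X$ comes from: $|A_0|\le 2k$ and each such edge is shorter than $2\tau_0=X/k$. So $X$ pays for the endpoint attachments at the \emph{bottom} scale, not the top.
\item For the step $i\to i+1$, use $A_{i+1}\subseteq A_i$: both $v\in A_{i+1}$ and $c_{i+1}(v)$ already lie in connected level-$i$ clusters. Form the graph on the $M_i$ level-$i$ clusters with an edge between the clusters of $v$ and $c_{i+1}(v)$ for each $v\in A_{i+1}$. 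Take a spanning forest and realize each chosen edge by the pair $(v,c_{i+1}(v))$.
\item That is fewer than $M_i$ edges, each of length $<2\tau_{i+1}=4\tau_i$. It connects every level-$(i+1)$ cluster, since two points share such a cluster only when they share the center.
\end{itemize}
This is the precise form of your ``charge to $U_{i-1}$'' idea, and it is what you would have to carry out.

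Your accounting also cannot meet the stated constant. Realizing each spanning-forest edge by a center-to-center path of cost $<6\tau_i$ already uses the whole per-scale budget $6M_i\tau_i$. That leaves nothing for the transition cost $4M_i\tau_i$, and you would end with something like $2X+10\sum_i M_i\tau_i$.

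Once clusters are internally connected, the detours $s\to c_i(s)$ and $t\to c_i(t)$ are unnecessary. A level-$i$ demand only needs its two clusters joined. So take a spanning forest on the level-$i$ clusters with one edge per level-$i$ demand, and realize each chosen edge by the demand itself, of length $<\tau_{i+1}=2\tau_i$. This costs less than $2M_i\tau_i$, and the split $4+2$ gives the constant $6$.

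Two minor points. No widening of the top scale is needed, since $X<\tau_{L+1}$ strictly. Your final step, adding the short demands at cost at most $X/2$, is fine.
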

\begin{proof}
For each $u\in U_i$, call the fibers $c_i^{-1}(u)$, the \emph{level-$i$
clusters}. Note that each has diameter less than $4\tau_i$ and its points lie within
$2\tau_i$ of the center $u$.  We construct the forest in three groups of
edges, maintaining the invariant that after processing level $i$, all
points inside each level-$i$ cluster are mutually connected.

\emph{Base ($i=0$).}  For every $a\in A_0$ add the edge $(a,c_0(a))$, of
length less than $2\tau_0$.  This connects each level-$0$ cluster.  As
$|A_0|\le 2k$ and $\tau_0=\delta=X/(2k)$, the total cost is less than
$2\tau_0|A_0|\le 2\cdot\frac{X}{2k}\cdot 2k=2X$.

\emph{Transitions ($i\to i+1$, for $0\le i<L$).}  Assume the level-$i$
clusters are connected.  Form an auxiliary graph whose nodes are the
level-$i$ clusters; for each $v\in A_{i+1}$, add an edge between the
level-$i$ cluster containing $v$ and the one containing $c_{i+1}(v)$ (both
$v$ and $c_{i+1}(v)$ lie in $A_{i+1}\subseteq A_i$).  Take a spanning
forest of this auxiliary graph, and realize each of its edges by the
witness pair $(v,c_{i+1}(v))$, of length $d(v,c_{i+1}(v))<2\tau_{i+1}=4\tau_i$.
Since two points share a level-$(i+1)$ cluster only if their centers
$c_{i+1}(\cdot)$ coincide, connecting the level-$i$ clusters this way makes
every level-$(i+1)$ cluster connected.  The auxiliary graph has $M_i$
nodes, so its spanning forest has fewer than $M_i$ edges, and the cost is
less than $4M_i\tau_i$.

\emph{Demand layers ($0\le i\le L$).}  Assign each retained demand of
length $\ell$ to the unique level $i$ with $\tau_i\le\ell<\tau_{i+1}$ (such
an $i$ exists because retained demands have $\delta=\tau_0\le\ell\le X<\tau_{L+1}$;
the last inequality is shown in \cref{lem:X-vs-B} below).  Both endpoints
then lie in $A_i$, i.e.\ in level-$i$ clusters.  Form an auxiliary graph with nodes being the level-$i$
clusters, and with one edge per level-$i$ demand, joining the clusters of its
two endpoints. Now take a spanning forest, and realize each chosen edge by the
demand itself, of length $\ell<\tau_{i+1}=2\tau_i$.  As the level-$i$
clusters are already connected, this connects both endpoints of every
level-$i$ demand.  There are fewer than $M_i$ edges, and so the cost is less
than $2M_i\tau_i$.

Summing over all levels, the total cost is at most
$2X+\sum_{i}(4M_i\tau_i+2M_i\tau_i)=2X+6\sum_{i=0}^L M_i\tau_i$. Deleting
any cycles only decreases it.  Finally, reconnecting the short demands
directly adds at most $X/2$, giving a feasible forest for all demands of
cost at most $\tfrac{5}{2}X+6\sum_i M_i\tau_i$, which bounds $\OPT$.
\end{proof}

\begin{lemma}
\label{lem:X-vs-B}
$\tau_L\le X<\tau_{L+1}=2\tau_L$, and consequently $X\le 2B$.
\end{lemma}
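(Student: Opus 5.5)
The plan is to prove the two inequalities on $X$ by unwinding the definitions, and then get $X\le 2B$ from a single capped MST.

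\emph{The sandwich.} By definition $\tau_L=2^L\delta=2^L X/(2k)$ with $L=\lfloor\log_2(2k)\rfloor$, so
\[
  2^L\le 2k<2^{L+1}.
\]
Multiplying by $X/(2k)>0$ gives
\[
  \tau_L=\frac{2^L X}{2k}\le X \qquad\text{and}\qquad \tau_{L+1}=2\tau_L=\frac{2^{L+1}X}{2k}>X .
\]
This also justifies the claim $X<\tau_{L+1}$ used in \cref{lem:feasible-forest}.

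\emph{The bound $X\le 2B$.} Let $(s_j,t_j)$ attain the maximum length $X$. I would first check that it is retained: we need $X>\delta=X/(2k)$, which holds because $k\ge1$ and $X>0$. Since $X\ge\tau_L$, both $s_j$ and $t_j$ lie in $A_L$, and $s_j\ne t_j$ because $X>0$. Hence $|A_L|\ge2$.

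Next, any spanning tree of $A_L$ contains a path between $s_j$ and $t_j$. By the triangle inequality for $\rho_L$ (\cref{lem:cap-metric}), the $\rho_L$-weight of that path is at least
\[
  \rho_L(s_j,t_j)=\min\{X,\tau_L\}=\tau_L .
\]
Therefore $W_L\ge\tau_L$. Combining this with $\tau_L>X/2$ from the first part gives $B\ge W_L\ge\tau_L> X/2$, so $X<2B$, and in particular $X\le 2B$.

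There is no real obstacle here. The only points that need care are checking that the maximizing demand is retained (this needs $k\ge1$) and using the strict inequality $2k<2^{L+1}$ that comes from the floor.
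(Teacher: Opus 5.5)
Your proof is correct and follows the paper's argument: you derive $k<2^L\le 2k$ from the floor to get $\tau_L\le X<2\tau_L$, then note that the longest demand is retained with both endpoints in $A_L$ at $\rho_L$-distance $\tau_L$, so $W_L\ge\tau_L>X/2$. The one difference is that you justify $W_L\ge\rho_L(s_j,t_j)$ via the tree path and the triangle inequality, a step the paper leaves implicit.
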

\begin{proof}
Since $L=\lfloor\log_2(2k)\rfloor$, we have $k<2^L\le 2k$, so
$\tau_L=2^L\delta\le 2k\cdot\frac{X}{2k}=X$ and
$\tau_L=2^L\delta>k\cdot\frac{X}{2k}=X/2$, i.e.\ $X<2\tau_L=\tau_{L+1}$.  The
maximum demand has length $X>\delta$, so it is retained, and since
$X\ge\tau_L$ both of its endpoints lie in $A_L$. Therefore, their $\rho_L$-distance is
$\min\{X,\tau_L\}=\tau_L$, so $W_L\ge\tau_L>X/2$.  Hence $X<2W_L\le 2B$.
\end{proof}

Combining the pieces gives the deterministic sandwich at the heart of the
algorithm.

\begin{proposition}
\label{prop:sandwich}
$\dfrac{\OPT}{24L+29}\;\le\;B\;\le\;3\,\OPT.$
\end{proposition}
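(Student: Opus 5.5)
The upper bound $B\le 3\,\OPT$ is exactly \cref{lem:sf-upper}, so only the lower bound needs work. For it, I will chain \cref{lem:feasible-forest}, \cref{lem:packing-vs-mst} and \cref{lem:X-vs-B}, and then bound the resulting sum of $W_i$'s by a multiple of their maximum $B$.

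First, start from \cref{lem:feasible-forest}:
\[
  \OPT\le \tfrac52 X+6\sum_{i=0}^L M_i\tau_i .
\]
Second, apply \cref{lem:packing-vs-mst} termwise, which gives $M_i\tau_i\le 4W_i\le 4B$ for each of the $L+1$ scales. Hence
\[
  6\sum_{i=0}^L M_i\tau_i\le 24(L+1)B .
\]
Third, \cref{lem:X-vs-B} gives $X\le 2B$, so $\tfrac52 X\le 5B$.

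Adding the two contributions yields
\[
  \OPT\le (24L+24+5)B=(24L+29)B,
\]
which is the claimed lower bound after dividing by $24L+29$.

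There is no real obstacle here; the argument is pure bookkeeping. The only points to check are the degenerate cases. If some $A_i$ has $|A_i|\le1$, then $W_i=0$, and \cref{lem:packing-vs-mst} still applies: when $A_i=\emptyset$ we have $M_i=0$, and $|A_i|=1$ is impossible because retained demands contribute both endpoints. The other thing to confirm is that the constants line up exactly to $24L+29$, which they do by the computation above.
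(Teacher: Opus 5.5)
Your proposal is correct and follows the paper's proof: the upper bound is \cref{lem:sf-upper}, and the lower bound combines \cref{lem:feasible-forest} with $M_i\tau_i\le 4W_i\le 4B$ over the $L+1$ levels and $X\le 2B$ to obtain $\OPT\le(24L+29)B$. Your check of the degenerate cases is consistent with \cref{lem:packing-vs-mst}. That lemma already handles $A_i=\emptyset$, and retained demands have distinct endpoints, so $|A_i|=1$ cannot occur.
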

\begin{proof}
The upper bound is \cref{lem:sf-upper}.  For the lower bound, combine
\cref{lem:feasible-forest} with $M_i\tau_i\le 4W_i\le 4B$
(\cref{lem:packing-vs-mst}) for each of the $L+1$ levels, and then with $X\le 2B$
(\cref{lem:X-vs-B}):
\[
  \OPT\le\tfrac{5}{2}X+6\sum_{i=0}^L M_i\tau_i
       \le 5B+24(L+1)B=(24L+29)B. \qedhere
\]
\end{proof}

\subsection{The algorithm}
\label{sec:sf-algo}

We can now prove \cref{thm:sf}.

\begin{proof}[Proof of \cref{thm:sf}]
Set $\eps=\tfrac12$.  Compute $X=\max_j d(s_j,t_j)$ with $O(k)$ queries, and
if $X=0$ output $0$. Otherwise, compute $\delta$, $L$, and the scales.  For each
$0\le i\le L$, compute
$A_i\subseteq V$ from the demand list (identifying repeated points). Note that this
costs $O(k)$ per level and $\sum_i|A_i|\le 2k(L+1)$ in total.

For each $i$, obtain an estimate $\widehat W_i$ of $W_i$ as follows.  If
$|A_i|\le 2$, compute $W_i$ exactly with $O(1)$ distance queries.
Otherwise run the estimator of \cref{thm:cs} on $(A_i,\rho_i)$ with the
constant accuracy $\eps=\tfrac12$, amplified by taking the median of
$\Theta(\log n)$ independent runs so that
\[
  (1-\eps)W_i\le\widehat W_i\le(1+\eps)W_i
\]
fails with probability at most $n^{-10}/(L+1)$. Note that a distance query in
$\rho_i$ is one query in $d$ followed by a cap.  Put
$\widehat B=\max_i\widehat W_i$ and output
\[
  Z=\frac{24L+29}{1-\eps}\,\widehat B .
\]

\emph{Correctness.}  By a union bound over the $L+1$ levels, with
probability at least $1-n^{-10}$ every estimate is accurate, and hence
$(1-\eps)B\le\widehat B\le(1+\eps)B$.  On this event, using
\cref{prop:sandwich},
\[
  Z=\frac{24L+29}{1-\eps}\widehat B\ge (24L+29)B\ge\OPT,
\]
and
\[
  Z\le\frac{24L+29}{1-\eps}(1+\eps)B
   =3(24L+29)B\le 9(24L+29)\,\OPT=O(\log k)\cdot\OPT,
\]
where we used $(1+\eps)/(1-\eps)=3$ and $B\le 3\OPT$.  Thus $Z$ is a
one-sided $O(\log k)$-approximation of $\OPT$.

\emph{Complexity.}  By \cref{thm:cs} with constant $\eps$, each level costs
$\tO(|A_i|)$ time and queries, and the median amplification multiplies this
by $\Theta(\log n)$.  Summing, the total is
$\tO\!\big(\sum_i|A_i|\big)=\tO(k(L+1))=\tO(k)$, plus $O(k)$ to read the
demands.
\end{proof}

\section{Open questions}
\label{sec:discussion}

Our work leaves the following two natural questions open:
\begin{itemize}
    \item All the known sublinear algorithms for Steiner Forest give an $O(\log k)$ approximation. Is it possible to improve the approximation ratio to e.g.~$O(1)$ in sublinear time? The $\Theta(\log k)$ loss is inherent to the scale-based estimators used in~\cite{MRTV26} and our work,
    so improving the ratio entails obtaining a genuinely different handle on $\OPT$.
    \item Is there an MIS size estimation algorithm that runs in $\tO(n)$ time?
\end{itemize}

\section*{AI Disclosure}
The algorithms and their analyses were mainly discovered using a custom agentic harness developed by the authors,
which uses GPT 5.6 Sol, Opus 5, and Gemini 3.1 Pro.
All proofs were verified by each author and then largely rewritten by the authors.
The authors assume all responsibility for the paper's content and its correctness.

\bibliographystyle{alpha}
\bibliography{refs}

\end{document}